\documentclass[preprint,12pt]{elsarticle}
\usepackage{amsmath, amssymb, amsthm, mathrsfs,amsfonts}
\usepackage{bbm}
\usepackage{caption}
\usepackage{subcaption}
\usepackage[bb=boondox]{mathalfa}
\usepackage{geometry}
\usepackage{graphicx}
\usepackage{hyperref}
\usepackage{enumitem} 
\usepackage{mathtools}
\usepackage{tikz}
\usepackage{dsfont}
\usetikzlibrary{graphs}
\usepackage{graphv1}

\DeclarePairedDelimiter{\floor}{\lfloor}{\rfloor}

\theoremstyle{plain}
\newtheorem{theorem}{Theorem}[section]
\newtheorem{lemma}[theorem]{Lemma}
\newtheorem{corollary}[theorem]{Corollary}

\newtheorem{definition}[theorem]{Definition}
\newtheorem{observation}[theorem]{Observation}
\newtheorem{claim}{Claim}

\theoremstyle{definition}

\theoremstyle{remark}
\newtheorem{remark}[theorem]{Remark}
\newtheorem{question}[theorem]{Question}

\newcommand{\R}{\mathbb{R}} 
\newcommand{\Z}{\mathbb{Z}} 
\def\gze {\Gamma_{E}(\mathbb{Z}_N)}
\def\gz {\Gamma(\mathbb{Z}_N)}
\def\gzi #1#2{\Gamma_E(\mathbb{Z}_{#1^{#2}})}
\def\gzj #1{\Gamma_E(\mathbb{Z}_{#1})}
\def\gr {\Gamma(R)}
\def\gre {\Gamma_{E}(R)}

\def\dimt {dim_{TH}}
\def\t #1{\text{#1}}

\begin{document}

\begin{frontmatter}
\title{The boxicity of the compressed zero divisor graph of the ring of integers modulo N}
\author{L. Sunil Chandran}
\affiliation{organization={Department of Computer Science and Automation, Indian Institute of Science},
            city={Bengaluru},
            postcode={560012}, 
            state={Karnataka},
            country={India}}
\author{
Suraj Kumar Sahoo}
\affiliation{organization={Department of Computer Science and Automation, Indian Institute of Science},
            city={Bengaluru},
            postcode={560012}, 
            state={Karnataka},
            country={India}}

\begin{abstract}

A $d$-dimensional box is the Cartesian product $[a_1,b_1]\times [a_2,b_2]\times  \cdots\times  [a_d,b_d]$ 
where each $[a_i,b_i]$ is a closed interval on the real line. The boxicity of a graph $G$, denoted by $box(G)$, is the minimum integer $d\geq 0$ such that $G$ is the intersection graph of a collection of $d$-dimensional boxes.

The class of zero divisor graphs introduced by Beck (1988) is a popular class of graphs associated with rings and has been studied extensively by several researchers. Suppose $Z(R)$ is the set of zero divisors of a ring $R$. The zero divisor graph $\Gamma(R)$ for a ring $R $ is defined as the graph with the vertex set $V(\Gamma(R))=Z(R)$ and $E(\Gamma(R))=\{\{x,y\}\colon x,y\in Z(R)\t{ with }x\neq y\text{ and }x y=0\}$. One can define an equivalence relation $\sim$ on $V(\Gamma(R))$ such that for distinct vertices $x$ and $y$, one has $x\sim y$ if and only if $x$ and $y$ have the same annihilator, i.e.,  $Ann(x)=Ann(y)$. The compressed zero divisor graph $\Gamma_E(R)$ for a ring $R$ is the simple graph obtained from $\Gamma(R)$ by retaining exactly one vertex from each equivalence class induced by $\sim$.

Despite being only (isomorphic to) an induced subgraph of a zero divisor graph, compressed zero divisor graphs retain a lot of structural information. Not only that, they are quite rich from a graph theoretic standpoint. For example, the disjointness graph of the non-empty proper subsets of a finite set of cardinality $t\geq 3$ is isomorphic to the compressed zero divisor graph of $\mathbb{Z}_N$ where $N$ is the product of $t$ many distinct prime numbers.

In this paper, we completely answer two open questions posed in Discrete Applied Mathematics 391 (2026), pp. 127-136. 
Let $N=\prod_{i=1}^a p_i^{n_i}$ be the prime factorization of a positive integer $N$ and let $\mathbb{Z}_N$ be the ring of integers modulo $N$. Our main contribution is determining the exact boxicity of the compressed zero divisor graph $\Gamma_E(\mathbb{Z}_N)$. We show that when $a\geq 2$, $box(\Gamma_E(\mathbb{Z}_N))= a-1$ if and only if one of the following is true: $(i)$ $a\geq 2$ and $N$ is the product of two coprime integers $x$ and $y$ such that $x$ is a square-free integer and $y$ is the cube of a prime number; $(ii)$ $a\geq 3$ and $N$ is square-free; $(iii)$ $a\geq 2$, $N$ is cube-free, not square-free, and contains at least one prime divisor $p_i$ such that $n_i=1$. If $a=2$ and $n_1=n_2=1$, then $\Gamma_{E}(\mathbb{Z}_N)$ is a clique, and so, $box(\Gamma_{E}(\mathbb{Z}_N))=0$. In all other cases, $box(\Gamma_{E}(\mathbb{Z}_N))=a$.

The proof for case $(i)$ uses the upper bound idea introduced in Discrete Applied Mathematics 391 (2026), pp. 127-136 in a more sophisticated way. The proof for case $(ii)$ is particularly interesting as it introduces fresh lower bounding ideas that go beyond the technique of merely finding an induced subgraph of large boxicity. Instead, it exploits the set of forbidden induced subgraphs of interval graphs hidden in $\Gamma_{E}(\mathbb{Z}_N)$ in a more targeted way. The proof for case $(iii)$ involves reducing the problem to a coloring problem on an auxiliary graph. 

\end{abstract}

\begin{keyword}
Zero Divisor Graphs\sep 
    Boxicity \sep  Threshold Dimension 
\end{keyword}
\end{frontmatter}

\section{Introduction}
For a given family $\mathcal{F}$ of sets, the \textit{intersection graph} of $\mathcal{F}$ is the graph with $V(G)=\mathcal{F}$ where two vertices are adjacent if and only if the corresponding sets intersect. 
\begin{definition}
    A $d$-dimensional box (or $d$-box) is a set $[a_1,b_1]\times [a_2,b_2]\times \cdots \times [a_d,b_d]$ where $[a_i,b_i] $ are closed intervals in $\R$. The boxicity of a graph $G$ (denoted as $box(G)$) is the minimum integer $d\geq0$ such that $G$ can be represented as the intersection graph of $d$-boxes. 
\end{definition}
Interval graphs are precisely the graphs with boxicity at most 1. By convention, the boxicity of complete graphs is assumed to be 0.

Graphs that are representable as the intersection of certain `well-behaved' geometric objects in low dimensions are of considerable interest for mathematicians and computer scientists alike. This is because graphs of this kind usually enjoy several nice properties, and sometimes these properties can be exploited to get efficient algorithms for problems that are hard in general. For example, it is known that the graph isomorphism problem (this is the problem of determining whether two graphs given as input are isomorphic to each other) is polynomial time solvable for intersection graphs of unit-squares in the plane \cite{neuen:LIPIcs.ESA.2016.70}. For another example, the problem of finding the largest clique in a graph is efficiently solvable for geometric intersection graphs like unit-disk graphs \cite{raghavan2003robust} and co-comparability graphs \cite{grotschel1981ellipsoid}.
The intersection graphs of $d$-boxes also exhibit many nice properties. For example, it was noted by Chandran, Francis and Sivadasan~\cite{maxp}, that for graphs of order $n$ and boxicity at most $d$, the size of the largest clique can be determined in polynomial time. This is because these graphs contain at most $(2n)^d$ many maximal cliques and, therefore, one can find the largest clique by listing all such cliques in polynomial time (using the algorithm described in \cite{chiba1985arboricity}). Similarly, if one considers the problem of approximating the largest independent set, graphs of order $n$ and boxicity at most $d\geq 2$ admit an approximation factor of $(1+\frac{1}{c}\log_2 n)^{d-1}$ for any constant $c\geq 1$ (assuming that the axis-parallel box corresponding to each vertex of the graph is given as input)~\cite{agarwal1998label,berman2001efficient}. In contrast, for general graphs it is known that there is not even a $O(n^{1-\epsilon})$-factor approximation for the largest independent set, for any $\epsilon>0$, unless $P=NP$\cite{10.1145/1132516.1132612}. The boxicity of graphs also has applications in ecology~\cite{10.1007/BFb0070404} and operations research~\cite{opsut1981fleet}. Several bounds are known for boxicity in terms of parameters such as maximum degree~\cite{scott2020better}, tree width~\cite{chandran2007boxicity},vertex cover~\cite{chandran2009cubicity}, degeneracy and crossing number~\cite{ADIGA20142}, and bandwidth~\cite{chandran2013cubicity}. 

The study of boxicity in restricted graph classes is an interesting area of research. It is known that planar graphs have boxicity at most $3$~\cite{thomassen1986interval} and that outerplanar graphs have boxicity at most $2$~\cite{Schein}. Hartman, Newman and Ziv~\cite{hartman1991grid} showed that planar bipartite graphs have boxicity at most 2. The boxicity of line graphs, split graphs, asteroidal triple free graphs, circular arc graphs, strongly chordal graphs and chordal bipartite graphs have also been studied \cite{chandran2011boxicity,10.1007/978-3-319-12340-0_7,cozzens1983computing, bhowmick2010boxicity,bhowmick2011boxicity,leaf,chandran2011chordal}. See \cite{chandran2026survey} for a survey of results on the boxicity of graphs. 

In this paper, we study the boxicity of an algebraically defined class of graphs called \textit{the compressed zero divisor graph of a ring} defined as follows. One can define a simple graph on any commutative ring $R$ by taking its elements as vertices and adding an edge between two distinct vertices \( x \) and \( y \) if and only if $xy=0$ in $R$. This class of graphs was introduced by Beck~\cite{BECK1988208} and was later modified by Anderson and Livingston~\cite{ANDERSON1999434} who restricted its set of vertices to include only the zero divisors of $R$, i.e., the set $Z(R)=\{a\in R\setminus\{0\}\colon\exists b\in R\setminus\{0\} \t{  such that  } ab=0\}$. A graph defined this way is known as \textit{the zero divisor graph of $R$} and such graphs have been studied extensively (see, for example, \cite{9, ANDERSON20121626, 10,ANDERSON1999434,BECK1988208}). 
\begin{definition}
     The zero divisor graph $\gr$ for a ring $R$ is defined as the graph with vertex set $V(\gr)=Z(R)$ and edge set $E(\gr)=\{\{x,y\}\colon x,y\in Z(R)\t{ with }x\neq y\text{ and }x y=0\}.$
\end{definition}
 For any ring $R$, and for $x\in R\setminus\{0\}$ define $Ann_R(x)=\{y\in R\colon xy=0\}$. We say that two non-zero elements $x$ and $y$ are equivalent, denoted as $x\sim y$, if $Ann_R(x)=Ann_R(y)$. Observe that $\sim$ defines an equivalence relation on the vertices of $\Gamma(R)$. Let $Q=\{\langle x_1 \rangle, \langle x_2 \rangle, \ldots ,\langle x_k\rangle \}$ be the collection of equivalence classes of $\gr$ under the relation $\sim$. 
 \begin{definition}\label{compgr}
    The compressed zero divisor graph $\gre$ of a ring $R$ is defined as the graph with $V(\gre)=Q$ and $E(\gre)=\{\{\langle x\rangle, \langle y\rangle\}\t{ }\colon\t{ }\langle x\rangle \neq \langle y\rangle \t{ and } \{x,y\}\in E(\gr)\}$. 
\end{definition}
Even though the compressed zero divisor graph of a ring is only (isomorphic to) an induced subgraph of the zero divisor graph, it retains a lot of structural information about the ring. Moreover, these graphs are highly structured, and in some instances, they can have interesting combinatorial interpretations. For example, for a square-free positive integer $N=\prod_{i=1}^ap_i$ (here $p_i$ is a prime number for each $i\in [a]$), the graph $\gze$ is isomorphic to the disjointness graph \footnote{For a family $\mathcal{C}$ of sets, the disjointness graph of $\mathcal{C}$ is the complement of the intersection graph of $\mathcal{C}$.} of the collection of all non-empty proper subsets of $[a]$, i.e., the disjointness graph of $\mathcal{P}([a])\setminus \{\emptyset,[a]\}$ (here $\mathcal{P}([a])$ is the power set of $[a]$). Indeed, for $x\in V(\gze)$, the map $x\mapsto \{i\in [a]\colon p_i \t{ does not divide }x\}$ is the required isomorphism. Disjointness graphs of families of sets are well studied, a popular special case being when the family is the collection of all equal cardinality subsets of a finite set, that is exactly the class of Kneser graphs. Compressed zero divisor graphs of rings have been studied extensively (see, for example, \cite{mulay2002cycles,ANDERSON20121626,spiroff2011zero}).

  The study of the boxicity of zero divisor graphs of some special kinds of rings was initiated by Kavaskar~\cite{TK}. Suppose $N$ is a positive integer and $N=\prod_{i=1}^a p_i^{n_i}$ is the prime factorization of $N$ with primes $p_1<p_2< \cdots < p_a$, where $a\geq 2$ and $\forall i\in [a], n_i\geq 1$. Kavaskar~\cite{TK} showed that $box(\gz)\leq \prod_{i=1}^a(n_i+1)-\prod_{i=1}^a(\floor{n_i/2}+1)-1$. Later, $box(\gz)$ was exactly determined by Chandran and Sahoo~\cite{chandran2026boxicity} who proved the following theorem.
\begin{theorem}[\cite{chandran2026boxicity}]\label{thmgz}
      Suppose $N$ is a positive integer and $N=\prod_{i=1}^a p_i^{n_i}$ is the prime factorization of $N$ with primes $p_1<p_2< \cdots < p_a$, where $a\geq 2$ and $\forall i\in [a], n_i\geq 1$. Then, $$box(\Gamma(\mathbb{Z}_N))=\begin{cases}
        a-1 & \text{if } p_1= 2, n_1=1 \t{ and } n_i\leq 2 \text{ for }2\leq i\leq a\\
        a & \text{otherwise }
    \end{cases}$$
\end{theorem}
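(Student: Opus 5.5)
We would work with Roberts' characterization: $box(G)\le k$ if and only if $G$ is the intersection of $k$ interval graphs on $V(G)$, each a supergraph of $G$. Lower bounds would come from the fact that the complement of a perfect matching on $2k$ vertices (the co-matching $K_{2k}$ minus a perfect matching) has boxicity $k$, and boxicity is monotone under induced subgraphs. For $x\in Z(\mathbb{Z}_N)$ write $e_i(x)=\min(v_{p_i}(x),n_i)$. Then $xy=0$ if and only if $e_i(x)+e_i(y)\ge n_i$ for every $i\in[a]$.

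\textbf{Upper bound $a$.} For each $i\in[a]$, let $G_i$ be the graph on $Z(\mathbb{Z}_N)$ in which distinct $x,y$ are adjacent if and only if $e_i(x)+e_i(y)\ge n_i$. This is a threshold graph, since adjacency is given by the weight sum $e_i(x)+e_i(y)$ reaching the threshold $n_i$. Threshold graphs are interval graphs, and $\bigcap_i G_i=\Gamma(\mathbb{Z}_N)$ by the criterion above. Hence $box(\Gamma(\mathbb{Z}_N))\le a$ in all cases.

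\textbf{Lower bounds.} For each $i$ take $u_i$ of type ``$n_i-1$ at $i$, $n_j$ elsewhere'' (e.g.\ $N/p_i$) and $w_i$ of type ``$0$ at $i$, $n_j$ elsewhere'' (e.g.\ $N/p_i^{n_i}$). A direct check gives the adjacencies:
\begin{itemize}
\item $u_iw_i$ is a non-edge, since the exponent sum at $p_i$ is $n_i-1$;
\item every other pair among $\{u_1,w_1,\dots,u_a,w_a\}$ is an edge, because at each prime at least one of the two factors carries full exponent, or the sum at $p_i$ is $(n_i-1)+n_i\ge n_i$.
\end{itemize}
So these $2a$ vertices induce a co-matching, giving boxicity $a$, provided the chosen vertices are distinct.

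The only collision occurs when $n_i=1$, where $u_i$ and $w_i$ have the same type. Then we instead pick two distinct elements of the class $\{kN/p_i : p_i\nmid k\}$, which is independent in $\Gamma(\mathbb{Z}_N)$. This class has $p_i-1$ elements, so it is possible unless $p_i=2$. Hence $box\ge a$ unless $p_1=2$ and $n_1=1$. In that exceptional case, dropping index $1$ still gives a co-matching on $2(a-1)$ vertices, so $box\ge a-1$.

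To finish the ``otherwise'' case, suppose $p_1=2$, $n_1=1$ and some $n_k\ge 3$. We would seek a larger obstruction, for instance using the vertex $N/2$ together with vertices of exponent $1$ and $2$ at $p_k$, to raise the bound back to $a$. If no co-matching of size $a$ exists, we would use another forbidden induced subgraph for the extra dimension, such as an induced $C_4$ or an asteroidal triple.

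\textbf{Main obstacle: upper bound $a-1$.} This is the step we expect to be hardest: showing $box\le a-1$ when $p_1=2$, $n_1=1$ and all $n_i\le 2$. The plan is to absorb the prime-$2$ threshold graph $G_1$ into one of the others, say $G_2$. We would replace $G_1\cap G_2$ by a single interval supergraph $H$ of $\Gamma(\mathbb{Z}_N)$ whose non-edges cover all non-edges of $G_1$ and of $G_2$.

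The key facts to exploit are these:
\begin{itemize}
\item Since $n_1=1$ and the unit group modulo $2$ is trivial, the class of odd elements with given odd part is a singleton.
\item The even and odd vertices split $G_1$ into a clique-plus-independent structure.
\item With $n_2\le 2$, $G_2$ has only three exponent levels $0,1,2$.
\end{itemize}
We would order vertices by the pair $(e_1,e_2)$ lexicographically and assign intervals as follows. Vertices with $e_2=n_2$ get long intervals. Vertices with $e_2=0$ get short intervals, placed separately according to parity. Vertices with $e_2=1$ get intervals that overlap each other exactly when $n_2=2$ permits it, i.e.\ they are mutually adjacent, but avoid the odd/$e_2=0$ block appropriately. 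We would then verify case by case that every pair non-adjacent in $G_1$ or in $G_2$ receives disjoint intervals. The delicate point is the interaction between odd vertices of the same $e_2$-level, which needs the singleton-class fact; that is exactly why both $p_1=2$ and $n_i\le2$ are required.
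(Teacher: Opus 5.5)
Your upper bound $a$ via the threshold graphs $G_i$ is correct. So are the co-matching lower bound $a$ outside the case $p_1=2,\ n_1=1$ and the bound $a-1$ inside it. The subcase $p_1=2$, $n_1=1$, $n_k\ge 3$ is only a search plan, but it closes with an explicit co-matching once you use two distinct elements of one associate class. Take $x=cN/p_k^{n_k-1}$ and $x'=c'N/p_k^{n_k-1}$, with $c\ne c'$ in $[1,p_k^{n_k-1})$ prime to $p_k$; they are non-adjacent because $1+1<n_k$. The pair $N/2$, $N/(2p_k)$ is non-adjacent because both are odd. All four cross products vanish, since the exponent sums at $p_k$ are $1+n_k$ and $1+(n_k-1)$. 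Adding your pairs for the other $a-2$ indices, whose vertices are all even and hence adjacent to $N/2$ and $N/(2p_k)$, gives $\overline{aK_2}$. The two associates are essential: in $\Gamma_E(\mathbb{Z}_N)$, where associates are identified, the value can drop to $a-1$ (Lemma~\ref{2biv}).

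The genuine gap is the upper bound $a-1$, and more case-checking will not fix it. For $a\ge 3$ there is \emph{no} interval graph $H$ with $E(\Gamma(\mathbb{Z}_N))\subseteq E(H)\subseteq E(G_1)\cap E(G_2)$, so $G_1$ cannot be absorbed into $G_2$ while $G_3,\dots,G_a$ are kept. Take $x_1\ne x_2$ of the form $cN/p_2^{n_2}$ with $p_2\nmid c$; these are even, have exponent $0$ at $p_2$, and are full elsewhere. Take two distinct odd multiples $y_1,y_2$ of $p_2^{n_2}$; there are $N/(2p_2^{n_2})\ge 3$ of them. Then $x_1x_2\notin E(G_2)$ and $y_1y_2\notin E(G_1)$, yet all four products $x_sy_t$ are $0$. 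So $x_1y_1x_2y_2$ is an induced $C_4$ in $H$; for $N=30$ it is $\{10,3,20,9\}$, and any $G_k$ in place of $G_2$ fails the same way.

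The missing idea is to spread the odd--odd non-edges over all $a-1$ graphs. For each $i\ge 2$, let $I_i$ have as non-edges exactly those of $G_i$ together with the pairs of odd vertices not both full at $p_i$. For $n_i=2$, realize $I_i$ as follows: odd vertices with $e_i=0$ as distinct points in $[0,1]$; odd with $e_i=1$ as distinct points in $[2,3]$; even with $e_i=1$ as $[2,5]$; odd with $e_i=2$ as $[4,8]$; even with $e_i=0$ as distinct points in $[6,7]$; even with $e_i=2$ as $[0,8]$. For $n_i=1$: odd with $e_i=0$ as points in $[0,1]$; odd with $e_i=1$ as $[2,5]$; even with $e_i=0$ as points in $[3,4]$; even with $e_i=1$ as $[0,5]$. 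Then $E(\Gamma(\mathbb{Z}_N))\subseteq E(I_i)\subseteq E(G_i)$. Two distinct odd vertices are separated in some $I_i$, because $N/2$ is the only odd vertex full at every odd prime; this is exactly where $p_1=2$, $n_1=1$ enter. The condition $n_i\le 2$ is what makes each $I_i$ an interval graph: for $n_i=3$, two even vertices with $e_i=1$ and odd vertices with $e_i=2,3$ give an induced $C_4$. This distributed absorption is the idea that Lemma~\ref{2biv} refines, with $p_l$ in the role of $2$.
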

In the same paper, they posed the following question. 
\begin{question}\label{q1}
    What is the exact value of $box(\gze)$ and when does $box(\gze)=box(\gz)?$
\end{question}
A reduced ring $R$ is a ring for which $x\in R$, $x^2=0\implies x=0$. Let $R$ be a finite commutative reduced ring with identity. Suppose the chromatic number of $\gr$ is $k$. Kavaskar~\cite{TK} showed that $box(\gr)\leq 2^k-2$. This bound was improved by Chandran and Sahoo~\cite{chandran2026boxicity} by showing that $\floor{k/2}\leq box(\gr)\leq \dimt(\gr)\leq k$ (see Definition~\ref{thresholddim} for the definition of $\dimt$). In the same paper, they posed the following question.
\begin{question}\label{q2}
    Suppose $R$ is a non-zero finite commutative reduced ring with identity. Suppose the chromatic number of $\gr$ is $k\geq 3$ (the case $k\leq 2$ is trivial). Is the lower bound $box(\gr)\geq \lfloor\frac{k}{2}\rfloor$ tight? 
\end{question}
In this paper, we settle both Question~\ref{q1} and Question~\ref{q2} completely.  
\subsection{Our Results}\label{ourresults} Let $N$ be a positive integer and let $N=\prod_{i\in [a]} p_i^{n_i}$ be the prime factorization of $N$ with the primes $p_1<p_2< \cdots < p_a$ where $\forall i\in [a],$ $  n_i\geq 1$. The case where $a=1$ is easy because $\gze$ is empty or a clique for $n_1\leq 3$, and it is a threshold graph  for $n_1\geq 4$ (see Definition~\ref{thresholddef}). So we will assume $a\geq 2$. Let $P=\{i\in [a]\colon n_i=1\}$. We completely determine the boxicity of the compressed zero divisor graph of $\Z_N$. 

More precisely, we show the following theorem.
\begin{theorem}\label{thmmain}
   $box(\gze)=a-1$ if and only if one of the following is true, 
\begin{itemize}
    \item $a\geq 2$, $|P|=a-1$ and the unique index $j\in [a]\setminus P$ satisfies $n_j=3$.
    \item $a\geq 3$, $P=[a]$.
    \item $a\geq 2 $, $\emptyset\subsetneq P\subsetneq [a] $ and $\forall i\in [a]\setminus P$, $n_i=2$.
\end{itemize}
If $a=2$ and $n_1=n_2=1$, then $\gze$ is a clique, and so, $box(\gze)=0$. In all other cases with $a\geq 2$, $box(\gze)=a$.
\end{theorem}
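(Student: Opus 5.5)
We will work with an explicit model of $\gze$. Each equivalence class corresponds to a proper divisor $d$ of $N$ with $d\neq 1$, i.e., to an exponent vector $e=(e_1,\dots,e_a)$ with $0\le e_i\le n_i$ and $e\notin\{0,(n_1,\dots,n_a)\}$. Two distinct vectors $e,f$ are adjacent iff $e_i+f_i\ge n_i$ for all $i$.

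\textbf{General upper bound $box(\gze)\le a$.} Write $G_i$ for the graph on $V$ in which $e\sim f$ iff $e_i+f_i\ge n_i$. Then $\gze=\bigcap_i G_i$, and each $G_i$ is a threshold graph: order the vertices by $e_i$; the vertices with $2e_i\ge n_i$ form a clique and everything else is nested. Threshold graphs are interval graphs, so we get $a$ interval supergraphs, as in \cite{chandran2026boxicity} via threshold dimension.

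\textbf{General lower bound $box(\gze)\ge a-1$.} We will exhibit an induced subgraph whose complement contains $a-1$ (or $a$) pairwise ``independent'' non-interval obstructions, and then use the standard fact that each interval supergraph can only kill a bounded structure. The cleanest route is to find an induced crown or cocktail-party type subgraph $K_{2,\dots,2}$ minus a matching, whose boxicity is known. Take the vectors supported on single coordinates, $u_i=n_ie_i$-type divisors, together with their complements $\bar u_i$. Adjacency gives $u_i\sim\bar u_j$ iff $i\ne j$ (roughly), which yields a crown-like structure of boxicity about $a/2$. To reach $a$ we will instead use the sets $A_i=\{$vectors with $e_i=0\}$ and show that any interval supergraph $I$ can "repair" a non-edge of type $i$ (i.e. $e_i+f_i<n_i$) only for essentially one coordinate. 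For this we will find, for each $i$, an asteroidal triple or a $C_4$ in the complement of $\gze$ whose non-edges all fail only in coordinate $i$; an interval supergraph must add some of these, and a counting/pigeonhole over the $a$ coordinates gives the lower bounds.

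\textbf{Exact cases.} When some $n_i\ge2$ with $n_i\ne2,3$ patterns, or $P=\emptyset$, there is for every coordinate $i$ a triple of vertices pairwise non-adjacent solely because of coordinate $i$, forming an asteroidal triple in every graph $G_j$ with $j\neq i$. Hence every interval supergraph breaks at most one coordinate, giving $box=a$. For the three exceptional families we will construct $a-1$ interval supergraphs. Take $G_i$ for $i$ in all but two coordinates, and merge two coordinates $s,t$ into a single interval graph $H$ with $\bigcap_{i\ne s,t}G_i\cap H=\gze$. For $n_s=1$ and $n_t\in\{1,2,3\}$, the pair $(e_s,e_t)$ ranges over a small grid. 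We will explicitly place intervals by the value of $(e_s,e_t)$ so that every non-edge failing only in $s$ or $t$ is non-adjacent in $H$, while the adjacencies that $H$ adds wrongly are already killed by another $G_i$. In case $(ii)$ with $a\ge3$, we will use that a vector with $e_s=e_t=0$ must have a $0$ somewhere else or be excluded. For case $(iii)$, where several $n_i=2$, we will reduce to pairing each $2$-coordinate appropriately, which amounts to a coloring problem on an auxiliary graph.

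\textbf{Main obstacle.} The hardest part will be the converse lower bound in case $(ii)$ and the matching tightness: showing $box\ge a-1$ for square-free $N$ cannot come from a single induced subgraph. So we will argue directly that any interval supergraph covers non-edges of at most two coordinates, by exhibiting forbidden induced subgraphs (asteroidal triples and $C_4$'s) spanning every triple of coordinates.
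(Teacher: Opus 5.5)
You have a genuine gap on both sides. Your bound $box(\Gamma_E(\mathbb{Z}_N))\le a$ via the coordinate threshold graphs $G_i$ is correct. The problem is the upper-bound plan for the three exceptional families. Suppose you keep $G_i$ unchanged for $i\neq s,t$. Then $H$ must contain every edge of $\Gamma_E(\mathbb{Z}_N)$ and must omit every non-edge whose failing coordinates all lie in $\{s,t\}$. These forced constraints already contain an induced $C_4$:
\begin{itemize}
\item Square-free $N$, $a\ge 4$, distinct $x,y\notin\{s,t\}$: the vertices $N/p_t,\ N/p_s,\ N/(p_tp_x),\ N/(p_sp_y)$ form a $4$-cycle whose diagonals fail only at $p_t$ and only at $p_s$.
\item $N=p_sp_zq^3$ (case (i)): merging $s$ with the cube coordinate is blocked by the cycle $p_sp_zq,\ p_zq^2,\ p_sq,\ p_zq^3$. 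Merging $s,z$ is blocked by the $C_4$ on $\{p_zq^3,p_zq^2,p_sq^3,p_sq^2\}$.
\item $N=p_sp_zq^2$ (case (iii)): the cycle $p_sp_zq,\ q^2,\ p_sp_z,\ p_zq^2$ blocks merging $s$ with $q$, and similar cycles block the other pairs.
\end{itemize}
So the remaining interval graphs cannot be the pure coordinate graphs. The paper handles (ii) and (iii) differently. There $\Gamma_E(\mathbb{Z}_N)$ depends only on the exponent pattern, so one swaps a prime of exponent $1$ for $2$ and applies the $a-1$ case of Theorem~\ref{thmgz} to $\Gamma(\mathbb{Z}_{N'})$; no auxiliary coloring is involved. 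For (i), Lemma~\ref{2biv} builds graphs $I_i$ ($i\in P\setminus\{l\}$) that also kill non-edges failing at $p_l$, and the pairs inside $J_{1,1}$ that $I_j$ leaves adjacent.

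The lower-bound side has matching gaps. In case (ii) you want every interval supergraph to cover non-edges of at most two coordinates. This is false. The non-edges $\{N/(p_1p_2),N/p_2\}$, $\{N/(p_2p_3),N/p_3\}$, $\{N/(p_1p_3),N/p_1\}$ fail only at $p_2,p_3,p_1$ respectively. All three are non-edges of the interval supergraph given by $N/p_1\mapsto[0,10]$, $N/p_2\mapsto[5,20]$, $N/p_3\mapsto[8,30]$, $N/(p_1p_2)\mapsto[25,26]$, $N/(p_2p_3)\mapsto[1,2]$, $N/(p_1p_3)\mapsto[15,16]$, and every other vertex $\mapsto[0,30]$. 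Even if the claim held, it would only give $\lceil a/2\rceil$. The paper instead counts the $a(a-1)$ individual non-edges $T(j,k)$. It shows one interval supergraph contains at most $a+1$ of them, using Erd\H{o}s--Ko--Rado when at most two pairs are twins and Corollary~\ref{6graph} and Figure~\ref{fig} otherwise; hence $a-2$ graphs cannot suffice.

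For the $box=a$ cases, \emph{asteroidal triples in $G_j$} cannot exist, since each $G_j$ is threshold and hence interval. Nor can a uniform ``one coordinate per interval graph'' principle hold, because it fails in the exceptional families. You need case-specific witnesses:
\begin{itemize}
\item an induced $\overline{aK_2}$ when $P=\emptyset$, when some $n_j\ge4$, or when there are two cubes;
\item for exactly one cube, some square and $P\neq\emptyset$ (e.g.\ $N=pq^2r^3$), a family your case list omits: the $C_4$ conflict graph of Observation~\ref{conflict}, with its $5$-cycle and forced coloring, combined with a join.
\end{itemize}
Likewise, $\ge a-1$ in case (i) needs an explicit $\overline{(a-1)K_2}$ such as $\bigcup_{k\in P}\{N/p_k,N/(p_kp_j)\}$. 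Your single-coordinate vectors do not supply one: in fact $u_i\sim\bar u_j$ iff $i=j$, which gives a split graph of boxicity at most $2$.
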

The proof of Theorem~\ref{thmmain} also helps us to get the exact value of $\dimt(\gze)$ (see Definition~\ref{thresholddim} for the definition of $\dimt$). We have the following theorem.  
\begin{theorem}\label{thmthgze}
     $\dimt(\gze)=a-1$ if and only if $P=[a]$ and $a\geq 3$. If $\gze$ is a complete graph then $\dimt(\gze)=0$. In all other cases with $a\geq 2$, $\dimt(\gze)= a$.        
\end{theorem}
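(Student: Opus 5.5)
The plan is to sandwich $\dimt(\gze)$ between the boxicity values of Theorem~\ref{thmmain} and explicit threshold covers, using the standard inequality $box(G)\le \dimt(G)$. Throughout I identify $V(\gze)$ with the proper divisors $d$ of $N$ ($1<d<N$), two of which are adjacent iff $N\mid dd'$. Writing $v_i(d)$ for the exponent of $p_i$ in $d$, a non-edge $\{d,d'\}$ is precisely a pair with $v_i(d)+v_i(d')<n_i$ for some $i$.

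\textbf{Upper bound $a$.} For each $i\in[a]$ let $T_i$ be the graph on $V(\gze)$ with $d\sim d'$ iff $v_i(d)+v_i(d')\ge n_i$. This is a weight/threshold rule, so each $T_i$ is a threshold graph. Every edge of $\gze$ lies in every $T_i$, and every non-edge is missing from some $T_i$, so $\gze=\bigcap_i T_i$ and $\dimt(\gze)\le a$. Combined with $box\le\dimt$ and Theorem~\ref{thmmain}, this already gives $\dimt(\gze)=a$ whenever $box(\gze)=a$. It also gives $\dimt=0$ for cliques.

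\textbf{The square-free case $P=[a]$, $a\ge3$.} Here I work with the disjointness-graph model: $S(d)=\{i:p_i\nmid d\}$, with adjacency iff the sets are disjoint. The lower bound $a-1$ comes from Theorem~\ref{thmmain}, so I need $a-1$ threshold supergraphs covering all intersecting pairs. For $i\le a-2$ I take $T_i$ to be the graph in which the sets containing $i$ form an independent set and everything else is adjacent to everything; this is threshold.

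The last graph $T_{a-1}$ must destroy every pair meeting inside $\{a-1,a\}$ but nowhere in $[a-2]$. A naive additive weighting fails, because pairs of type ``contains $a-1$ but not $a$'' against ``contains $a$ but not $a-1$'' create a $2K_2$ tie. I therefore plan to redistribute these pairs using that no set equals $[a]$. For example, I would send a pair $S,S'$ with $S\cap S'=\{a\}$ to some $T_j$ with $j\notin S\cup S'$, or enlarge the independent set of $T_j$ to ``contains $j$ or contains $a$ but misses some designated element''. I would then check the nested-neighbourhood condition characterising threshold graphs. Finding this reassignment, and verifying that each modified $T_j$ is still threshold and still contains all disjoint pairs, is the step I expect to be the main obstacle. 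I would first try small $a$ ($a=3$: six vertices, two threshold graphs) to guess the pattern.

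\textbf{Lower bound $a$ in cases (i) and (iii) of Theorem~\ref{thmmain}.} These are the cases where $box=a-1$ but the claim is $\dimt=a$. Here I use that a threshold graph has no induced $2K_2$, $C_4$ or $P_4$. Consequently, if two non-edges $\{u,v\},\{x,y\}$ of $\gze$ satisfy $ux,vy\in E(\gze)$, no single threshold supergraph of $\gze$ can omit both, since it would contain an induced $C_4$ or $2K_2$ on $u,x,v,y$. So $\dimt(\gze)$ is at least the chromatic number of the conflict graph on non-edges. I will exhibit $a$ pairwise-conflicting non-edges, one ``responsible'' for each prime.

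For $i\in P$ I take $\{N/p_i,\,N/p_i\}$-type pairs realised on two distinct divisors. For the non-square-free prime $j$ ($n_j=2$ or $3$) I take a pair built from $p_j$-powers, such as $\{N/p_j^{2}\cdot p_j,\ N/p_j^{n_j}\cdot p_j\}$ adjusted to be distinct. I then verify the required cross edges $N\mid ux$ and $N\mid vy$ prime by prime. The existence of at least one non-unit exponent is exactly what supplies the extra conflicting non-edge that the square-free case lacks.
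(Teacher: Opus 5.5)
\textbf{Gap: the square-free upper bound is missing.} Your upper bound $\dimt(\gze)\le a$ via the weight graphs $T_i$ is correct, and more self-contained than the paper's appeal to Theorem~\ref{th}. Your conflict criterion for two non-edges $\{u,v\},\{x,y\}$ with $ux,vy\in E(\gze)$ is the one the paper uses; note that the induced graph can also be a $P_4$, which is equally forbidden. The genuine gap is the square-free case $P=[a]$, $a\ge 3$. You never prove $\dimt(\gze)\le a-1$, and that is the entire content of the ``if'' direction.

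It is also not a routine completion of your scheme. Take $T_1,\dots,T_{a-2}$ to be the pure stars, and put $u=\{a-1\}$, $v=\{1,a-1\}$, $x=\{1,a\}$, $y=\{a\}$. Both $uv$ and $xy$ meet only inside $\{a-1,a\}$, so no star omits them and $T_{a-1}$ must omit both. Meanwhile $ux$, $uy$, $vy$ are disjoint pairs and must be edges. Hence $\{u,v,x,y\}$ induces a $P_4$ or a $C_4$ in $T_{a-1}$, for every $a\ge 3$. So the stars themselves must be altered. Both repairs you float fail. First, an index $j\notin S\cup S'$ need not exist, e.g.\ $S=\{1,a\}$, $S'=[a]\setminus\{1\}$. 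Second, the enlarged set ``contains $j$, or contains $a$ but misses a designated element'' contains the disjoint sets $\{j\}$ and $\{a\}$. But an independent set of a supergraph of $\gze$ must be an intersecting family.

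\textbf{How the paper closes this case.} The missing idea is that $\gze$ depends only on the exponent pattern of $N$. Replacing $p_1$ by $2$ gives $\gze\cong\Gamma_E(\mathbb{Z}_{N'})$ with $N'=2\prod_{i=2}^a p_i$. By Observation~\ref{obser}, this is an induced subgraph of $\Gamma(\mathbb{Z}_{N'})$. Since $N'\equiv 2\pmod 4$ and every odd prime appears to the first power, Theorem~\ref{th}(3) gives $\dimt(\Gamma(\mathbb{Z}_{N'}))=a-1$. Observation~\ref{sub} then yields $\dimt(\gze)\le a-1$. You must either invoke this or write down $a-1$ explicit threshold supergraphs.

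\textbf{Lower-bound witnesses need fixing.} Your witnesses for cases (i) and (iii) are mis-specified. For $n_j=3$ the pair $\{N/p_j,N/p_j^2\}$ is an edge, since $2+1=n_j$. For $n_j=2$ it collapses to a single vertex. In case (iii), several primes may have exponent $2$, and each needs its own pair. Fix $j\notin P$, and take $\{N/p_r^{n_r},N/p_r^{n_r-1}\}$ for every $r\notin P$ and $\{N/(p_kp_j),N/p_k\}$ for every $k\in P$. These $a$ non-edges are pairwise conflicting: the $j$-pair against a $k$-pair gives a $P_4$, and every other combination gives a $C_4$. This is exactly the paper's Claim~4.
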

Further, Theorem~\ref{thmmain} has the following interesting consequence for the case where $N$ is square-free. 
\subsubsection*{Boxicity of the disjointness graph of the power set of $[a]$}
 As discussed earlier, disjointness graph of the non-empty proper subsets of $[a]$ is isomorphic to $\gze$ when $N$ is square-free. Therefore, boxicity of $\gze$ immediately tells us the boxicity of the disjointness graph of $\mathcal{P}([a])$. The boxicity of this graph is exactly $a-1$ when $a\geq 3$. This is because when $a\geq 3$, adding the sets $\emptyset$ and $[a]$ to the graph does not change the boxicity because in the new graph, the set $\emptyset$ is adjacent to every other set in the new graph and the set $[a]$ is adjacent only to $\emptyset$. When $a=1$, the disjointness graph of $\mathcal{P}([a])$ is an edge (which has boxicity $0$ because it is a clique) and when $a=2$, it is an interval graph. Therefore, the boxicity of the disjointness graph of $\mathcal{P}([a])$ is always $a-1$. In fact, the proof of Lemma~\ref{gzep} actually implies something stronger. Our proof implies that when $a\geq 3$, the disjointness graph induced by only the sets of cardinality 1 and 2 is already a graph with boxicity $a-1$. 

\begin{corollary}\label{pow}
    The disjointness graph of the power set $\mathcal{P}([a])$ has boxicity equal to $a-1$. Moreover, the subgraph induced by the sets $\binom{[a]}{1}\cup \binom{[a]}{2}$ has boxicity equal to $a-1$.
\end{corollary}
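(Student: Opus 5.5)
Both claims of Corollary~\ref{pow} follow from Theorem~\ref{thmmain} once we identify the disjointness graph with $\gze$ for square-free $N$; only the second claim needs a look inside the lower-bound proof. Write $D_a$ for the disjointness graph of $\mathcal{P}([a])$ and $D'_a$ for the subgraph induced by $\binom{[a]}{1}\cup\binom{[a]}{2}$. Take $N=\prod_{i=1}^a p_i$ square-free. By the isomorphism $x\mapsto\{i\colon p_i\nmid x\}$ from the introduction, $\gze$ is the disjointness graph of $\mathcal{P}([a])\setminus\{\emptyset,[a]\}$.

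For $a\geq 3$, the second bullet of Theorem~\ref{thmmain} (with $P=[a]$) gives $box(\gze)=a-1$. To pass to $D_a$, add back $\emptyset$ and $[a]$.
\begin{itemize}
\item The vertex $\emptyset$ is universal, so it can be represented by a box containing all the others; this does not increase the dimension.
\item The vertex $[a]$ is adjacent only to $\emptyset$. Take a point far outside the bounding region in the first coordinate and enlarge the box of $\emptyset$ in that coordinate to reach it. This keeps all other adjacencies unchanged, since $\emptyset$ is already adjacent to everything.
\end{itemize}
So $box(D_a)\le a-1$. Monotonicity of boxicity under induced subgraphs gives $box(D_a)\ge a-1$.

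The small cases are checked directly. For $a=1$, $D_1$ is a single edge, a clique, so its boxicity is $0$. For $a=2$, $D_2$ has $\emptyset$ universal, $\{1\}\sim\{2\}$, and $[2]$ a pendant vertex on $\emptyset$. This is a tree with no asteroidal triple, hence an interval graph, so its boxicity is $1$.

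For the second statement, $D'_a$ is an induced subgraph of $\gze$, so $box(D'_a)\le a-1$ is immediate. The lower bound is the real content. I would take it from the proof of Lemma~\ref{gzep}, the lower-bound lemma for the case $P=[a]$, by checking that every vertex it uses corresponds to a set of size $1$ or $2$. In divisor language, these are the elements $N/p_i$ and $N/(p_ip_j)$.

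If I had to redo that lower bound directly, I would argue by contradiction. Suppose $D'_a$ is an intersection of $a-2$ interval supergraphs. Each non-edge of $D'_a$ (two intersecting sets) must be missed by some interval supergraph. The non-edges are of two kinds:
\begin{itemize}
\item $\{i\}$ versus $\{i,j\}$, and
\item $\{i,j\}$ versus $\{i,k\}$.
\end{itemize}
Each interval supergraph must avoid induced $C_4$'s and asteroidal triples built from these sets. The graph is rich in such structures: for instance, $\{1\},\{2\},\{1,3\},\{2,3\}$ induce a $C_4$ in the complement pattern. Every interval supergraph can therefore "cover" only non-edges concentrated around essentially one index. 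A pigeonhole over the $a$ indices then forces at least $a-1$ supergraphs.

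The main obstacle is making this targeted forbidden-subgraph argument precise, and that is exactly what the paper's proof of Lemma~\ref{gzep} is described as doing. So I would first confirm that the proof only invokes vertices of sizes $1$ and $2$, and rely on it rather than reprove it.
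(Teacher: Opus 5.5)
Your proposal is correct and is essentially the paper's argument. You identify $\gze$ for square-free $N$ with the disjointness graph of the non-empty proper subsets of $[a]$, re-add the universal vertex $\emptyset$ and the pendant vertex $[a]$ without raising the boxicity, and observe that the lower-bound proof of Lemma~\ref{gzep} only uses the vertices $N/p_k$ and $N/(p_jp_k)$, i.e.\ the sets $\{k\}$ and $\{j,k\}$.

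A few slips that do not affect the proof:
\begin{itemize}
\item $D_2$ is a triangle with a pendant vertex, not a tree, though it is still an interval graph.
\item For $a\le 2$ the set $[a]$ lies in $\binom{[a]}{1}\cup\binom{[a]}{2}$, so $D'_a$ is not an induced subgraph of $\gze$; bound it via $D_a$ instead.
\item In your heuristic aside, $\{1\},\{2\},\{1,3\},\{2,3\}$ induce a $P_4$, not a $C_4$. The relevant $C_4$'s have the form $\{k_1\},\{j_1,k_1\},\{k_2\},\{j_2,k_2\}$ with $\{j_1,k_1\}\cap\{j_2,k_2\}=\emptyset$, e.g.\ $\{1\},\{1,2\},\{3\},\{3,4\}$.
\end{itemize}
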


For positive integers $a$ and $t$, the Kneser graph $K(a,t)$ is defined to be the disjointness graph of the collection of all subsets of $[a]$ of cardinality $t$. It is known that for $a\geq 2t+1$, $box(K(a,t))\leq a-2$  \cite{caoduro2023boxicity} and $box(K(a,2))=a-2$ \cite{caoduro2023boxicityy}. 
Lemma~\ref{gzep} (and its proof) proves that when the singleton sets are also included in the graph $K(a,2)$ when $a\geq 5$, the boxicity increases exactly by 1.  

The boxicity of the disjointness graph of $\mathcal{P}([a])$ also answers Question~\ref{q2} completely. To see this, we first state a result by Anderson and LaGrange~\cite{ANDERSON20121626}. 
\begin{lemma}[\cite{ANDERSON20121626}]\label{R}
    Suppose $R$ is a reduced Noetherian ring with $1\neq 0$ and suppose the number of minimal prime ideals of $R$ is $k$. Let $R' = \mathbb{Z}_2^k = \mathbb{Z}_2 \times \mathbb{Z}_2 \times \cdots \times \mathbb{Z}_2$ ($k$-copies). Then $\gre\cong \Gamma(R')$. 
\end{lemma}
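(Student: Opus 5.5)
The plan is to describe everything through the minimal primes of $R$. Since $R$ is Noetherian, it has finitely many minimal prime ideals $P_1,\ldots,P_k$. Since $R$ is reduced, $\bigcap_{i=1}^k P_i$ is the nilradical, which is $0$. The minimal primes are pairwise incomparable. For $x\in R$ I will attach the set $T(x)=\{i\in[k]\colon x\notin P_i\}$. The goal is to show that $x\mapsto T(x)$ induces a bijection between the $\sim$-classes of $\Gamma(R)$ and the proper nonempty subsets of $[k]$, with adjacency corresponding to disjointness. Both facts also hold for $\Gamma(R')$: there one uses the support of a vector in $\mathbb{Z}_2^k$. Nonzero vectors that are not units are exactly the vectors with proper nonempty support, and $uv=0$ iff the supports are disjoint.

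\textbf{Step 1: annihilators.} I will show $Ann_R(x)=\bigcap_{i\in T(x)}P_i$. If $y$ lies in this intersection, then for every $i$ either $x\in P_i$ or $y\in P_i$. Hence $xy\in\bigcap_i P_i=0$. Conversely, if $xy=0$ and $i\in T(x)$, then $xy\in P_i$ with $x\notin P_i$, so $y\in P_i$ by primality. The empty intersection is read as $R$, which is consistent with $x=0$.

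\textbf{Step 2: intersections of minimal primes separate subsets.} The key auxiliary fact is that $\bigcap_{i\in A}P_i\subseteq P_j$ forces $j\in A$. Indeed, a prime containing a finite intersection of ideals contains their product, hence contains one of them. So $P_i\subseteq P_j$ for some $i\in A$, and minimality gives $i=j$. Three consequences follow.
\begin{enumerate}[label=(\alph*)]
\item $Ann(x)=Ann(y)$ iff $T(x)=T(y)$. Hence the $\sim$-classes are indexed injectively by $T$.
\item $x\neq0$ iff $T(x)\neq\emptyset$. Also, $x$ is a zero divisor iff $T(x)\neq[k]$. For the forward direction, $T(x)=[k]$ gives $Ann(x)=0$. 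For the converse, if $j\notin T(x)$ then $\bigcap_{i\in T(x)}P_i\not\subseteq P_j$, so this intersection is nonzero.
\item Every proper nonempty $A\subseteq[k]$ equals $T(x)$ for some $x$. The ideal $I=\bigcap_{i\notin A}P_i$ is not contained in any $P_j$ with $j\in A$. By prime avoidance, $I\not\subseteq\bigcup_{j\in A}P_j$, and any $x\in I\setminus\bigcup_{j\in A}P_j$ works.
\end{enumerate}

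\textbf{Step 3: adjacency.} By Step 1, $xy=0$ iff $y\in P_i$ for all $i\in T(x)$, that is, iff $T(x)\cap T(y)=\emptyset$. This condition depends only on the classes. Moreover, disjoint nonempty sets are distinct, so no loops arise. Hence $\langle x\rangle\mapsto T(x)$ is a graph isomorphism from $\gre$ onto the disjointness graph of the proper nonempty subsets of $[k]$. Composing with the inverse of the support map on $R'$ gives $\gre\cong\Gamma(R')$.

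The only step needing care is Step 2, in particular the prime avoidance argument in (c) and the incomparability argument in (a). The rest is bookkeeping. The degenerate case $k=1$ gives two empty graphs and is consistent.
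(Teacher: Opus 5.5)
Your proof is correct and complete. Step 1 gives $Ann_R(x)=\bigcap_{i\in T(x)}P_i$. Step 2 uses incomparability of minimal primes ($\bigcap_{i\in A}P_i\subseteq P_j\Rightarrow j\in A$) and prime avoidance. Together they show that $\langle x\rangle\mapsto T(x)$ is a well-defined bijection from the classes of nonzero zero divisors onto the proper nonempty subsets of $[k]$. This bijection sends adjacency to disjointness, and disjointness of supports is exactly the edge relation of $\Gamma(\mathbb{Z}_2^k)$.

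The paper does not prove this lemma itself. It quotes it from Anderson and LaGrange, where it comes out of a general study of compressed zero-divisor graphs of reduced rings, so there is no in-paper argument to match against.

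Your route is a short, self-contained commutative-algebra proof. It also shows that the Noetherian hypothesis is used only to guarantee finitely many minimal primes, so your argument works for any reduced ring with finitely many minimal primes. For the only case the paper actually uses (finite reduced rings, in Corollary~\ref{cor:redringimprov}), an even shorter route exists. Such a ring is a product $F_1\times\cdots\times F_k$ of finite fields, and the annihilator of $(x_1,\ldots,x_k)$ depends only on its support and determines it, so the isomorphism with $\Gamma(\mathbb{Z}_2^k)$ is immediate. Your proof replaces this structure theorem by prime avoidance, which is what lets it cover the general Noetherian statement.
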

Suppose $R$ is a non-zero finite commutative reduced ring with identity and $k=\chi(\gr)$. Since finite rings are Noetherian and since $\chi(\gr)$ is equal to the number of minimal prime ideals of $R$ (see \cite{BECK1988208} for a proof), we have $\gre\cong \Gamma(\mathbb{Z}_2^k)$ by Lemma~\ref{R}.

Observe that there is a one-to-one mapping between the subsets $W\subseteq [a]$ and the indicator vectors $\mathbb{I}_W=(w_1,w_2,\ldots ,w_a)$, where for $i\in [a]$, $w_i=1$ if and only if $i\in W$. Note that $\mathbb{I}_W\in V(\Gamma(\mathbb{Z}_2^a))$ whenever $W\notin \{\emptyset,[a]\}$. Also, for two distinct subsets $W_1$ and $W_2$, we have $W_1\cap W_2=\emptyset$ if and only if $\mathbb{I}_{W_1}\cdot \mathbb{I}_{W_2}=\mathbb{0}$. Therefore, the disjointness graph of the non-empty proper subsets of $[a]$ is isomorphic to $\Gamma(\mathbb{Z}_2^a)$. This discussion along with Corollary~\ref{pow} gives the following corollary.
\begin{corollary}\label{cor:redringimprov}
    Let $R$ be a non-zero finite commutative reduced ring with identity and let $\chi(\gr)=k\geq 3$. Then $box(\gre)=k-1$. 
\end{corollary}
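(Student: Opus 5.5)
The plan is to reduce Corollary~\ref{cor:redringimprov} to Corollary~\ref{pow} through the chain of isomorphisms described just before the statement.

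\emph{Step 1: identify $\gre$.} Since $R$ is a non-zero finite commutative reduced ring with identity, it is Noetherian. By Beck's result, $\chi(\gr)$ equals the number of minimal prime ideals of $R$, so $R$ has exactly $k$ minimal primes. Lemma~\ref{R} then gives $\gre\cong\Gamma(\mathbb{Z}_2^k)$.

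\emph{Step 2: identify $\Gamma(\mathbb{Z}_2^k)$.} The vertices of $\Gamma(\mathbb{Z}_2^k)$ are the nonzero zero divisors of $\mathbb{Z}_2^k$. These are exactly the indicator vectors $\mathbb{I}_W$ with $W\notin\{\emptyset,[k]\}$: the vector $\mathbb{I}_{[k]}$ is the identity, hence a unit, and every other nonzero vector is killed by the indicator of its complement. The map $W\mapsto\mathbb{I}_W$ turns disjointness into a zero product. Hence $\Gamma(\mathbb{Z}_2^k)$ is isomorphic to the disjointness graph $D_k$ of $\mathcal{P}([k])\setminus\{\emptyset,[k]\}$.

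\emph{Step 3: relate $D_k$ to the full power set.} Let $D_k^+$ be the disjointness graph of all of $\mathcal{P}([k])$. It is obtained from $D_k$ by adding two vertices:
\begin{itemize}
\item $\emptyset$, which is adjacent to every other vertex;
\item $[k]$, which is adjacent only to $\emptyset$.
\end{itemize}
Corollary~\ref{pow} gives $box(D_k^+)=k-1$, and we need to transfer this to $D_k$.

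The upper bound is immediate: $D_k$ is an induced subgraph of $D_k^+$, and boxicity is monotone under induced subgraphs. So $box(D_k)\le k-1$.

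For the lower bound, Corollary~\ref{pow} also says that the subgraph induced by $\binom{[k]}{1}\cup\binom{[k]}{2}$ has boxicity $k-1$. For $k\ge 3$ this subgraph avoids both $\emptyset$ and $[k]$, since $2<k$. So it is an induced subgraph of $D_k$, which gives $box(D_k)\ge k-1$.

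As an alternative for the lower bound, one can go back from $D_k^+$ to $D_k$ directly. Take an optimal box representation of $D_k$ with $d\ge 1$ dimensions; note $D_k$ is not a clique when $k\ge3$, since two overlapping 2-sets are nonadjacent. Then:
\begin{itemize}
\item represent $\emptyset$ by a box containing all existing boxes, which costs no extra dimension;
\item represent $[k]$ by a box that lies outside all other boxes in the first coordinate but meets the $\emptyset$ box.
\end{itemize}
This shows $box(D_k^+)\le box(D_k)$.

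There is no substantive obstacle, because the real work is in Corollary~\ref{pow}, which we assume. The only points needing care are two hypotheses. First, $k\ge3$ is what places $\binom{[k]}{1}\cup\binom{[k]}{2}$ inside the proper nonempty subsets. Second, the identification of $\chi(\gr)$ with the number of minimal primes requires $R$ to be reduced and non-zero. Combining the steps gives $box(\gre)=box(D_k)=k-1$.
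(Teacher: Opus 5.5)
Your proposal is correct and follows the paper's own route: Beck's result and Lemma~\ref{R} give $\gre\cong\Gamma(\mathbb{Z}_2^k)$, the indicator-vector map identifies this with the disjointness graph of the nonempty proper subsets of $[k]$, and Corollary~\ref{pow} finishes. Your Step~3 spells out the passage between the full power set and the nonempty proper subsets, which the paper leaves implicit; in the paper that graph is $\gze$ for $N$ a product of $k$ distinct primes, so Lemma~\ref{gzep} applies to it directly.
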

Since $\gre$ is an induced subgraph of $\gr$, by Observation~\ref{sub}, we have the following theorem that settles Question~\ref{q2}  (see Definition~\ref{thresholddim} for the definition of $\dimt$).
\begin{theorem}\label{thm:redringimprov}
    Let $R$ be a non-zero finite commutative reduced ring with identity and let $\chi(\gr)=k\geq 3$. Then $$k-1\leq box(\Gamma(R))\leq\dimt(\gr)\leq  k.$$ 
\end{theorem}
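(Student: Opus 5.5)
The lower bound will come from Corollary~\ref{cor:redringimprov} combined with Observation~\ref{sub}, and the upper bound from the threshold-dimension bound already established by Chandran and Sahoo~\cite{chandran2026boxicity}.

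For the lower bound, since $R$ is a non-zero finite commutative reduced ring with identity and $\chi(\gr)=k\geq 3$, Corollary~\ref{cor:redringimprov} gives $box(\gre)=k-1$. By Definition~\ref{compgr}, $\gre$ is obtained from $\gr$ by keeping exactly one representative from each $\sim$-class. Two vertices of $\gre$ are adjacent exactly when the chosen representatives are adjacent in $\gr$. This does not depend on which representatives are chosen: if $Ann(x)=Ann(x')$, then $xy=0$ if and only if $x'y=0$. Hence $\gre$ is isomorphic to the induced subgraph of $\gr$ on any set of representatives. Boxicity is monotone under taking induced subgraphs (Observation~\ref{sub}), since restricting a box representation to a subset of vertices still represents the induced subgraph. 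Therefore
\[
k-1=box(\gre)\leq box(\gr).
\]

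For the upper bounds, I will use the general fact that $box(G)\leq \dimt(G)$ for every graph $G$. Each threshold graph is an interval graph, so it has boxicity at most $1$. Moreover, boxicity is subadditive under edge-intersection of graphs on a common vertex set. Thus, if $G$ is the intersection of $\dimt(G)$ threshold graphs, we obtain a box representation in $\dimt(G)$ dimensions. I expect this fact to be recorded with Definition~\ref{thresholddim}; if it is not, the argument just given proves it. The remaining inequality $\dimt(\gr)\leq k$ is exactly the upper bound of \cite{chandran2026boxicity} quoted earlier, namely $\lfloor k/2\rfloor\leq box(\gr)\leq\dimt(\gr)\leq k$ for finite commutative reduced rings with identity whose zero divisor graph has chromatic number $k$. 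Chaining these gives $k-1\leq box(\gr)\leq \dimt(\gr)\leq k$.

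None of these steps is a real obstacle, since all the substantive work is contained in Corollary~\ref{cor:redringimprov}, which rests on the lower bound of Lemma~\ref{gzep}. The only point that needs care is the identification of $\gre$ with an induced subgraph of $\gr$, which uses the annihilator argument given above.
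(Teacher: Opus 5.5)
Your proposal is correct and takes the same route as the paper. The lower bound comes from Corollary~\ref{cor:redringimprov} together with Observation~\ref{sub}; your representative-independence argument just makes explicit the paper's one-line remark that $\gre$ is an induced subgraph of $\gr$. The upper bounds come from Observation~\ref{boxth} and the bound $\dimt(\gr)\leq k$ of \cite{chandran2026boxicity}.
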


     The lower bound given in Theorem~\ref{thm:redringimprov} is tight. Let $p_1, p_2, \ldots ,p_k$ be distinct prime numbers not equal to 2. Take $ N=2\times \prod_{i=2}^{k} p_i$ and $R=\Z_N$. Note that $R$ is a reduced ring as every $u\in \gz$ has at least one prime factor missing and so $u^2\not\equiv 0\pmod N$. By Theorem~\ref{thmgz}, $box(\gr)=k-1$ and therefore, the lower bound is attained by this graph. On the other hand, if we take $N=\prod_{i=1}^{k} p_i$ and $R=\Z_N$, then Theorem~\ref{thmgz} gives $box(\gr)=k$. Therefore, Question~\ref{q2} is settled. 
     \subsection{Organization of the paper}
     The paper is organized as follows. First we lay down the essential background necessary for the paper in the preliminaries section. Then we settle a small corner case left open in \cite{chandran2026boxicity} about $\dimt(\gz)$. Then we move on to prove our main result, i.e., Theorem~\ref{thmmain}. We break down the proof into multiple parts for easy readability.  
     Let $N$ and $P$ be as defined in Section~\ref{ourresults}. We break down the cases as follows: 
     \begin{enumerate}
         \item $P=\emptyset$. This is handled in Lemma~\ref{cb}.
          \item $P\neq \emptyset$. We further break this case into the following subcases.
          \begin{enumerate}
           \item[2(a):] $P=[a]$. This case is handled in Lemma~\ref{gzep}.
           
            \item[2(b):] $\emptyset \subsetneq  P\subsetneq [a]$. We further subdivide this into more cases. 
            \begin{enumerate}
             \item[2(b)(i):] $\exists j\in [a]$ such that $n_j\geq 4$. This case is handled in Lemma~\ref{rest}.
             \item[2(b)(ii):] $\forall i\in [a],$ $ 1\leq n_i\leq 3$ and $\exists j,j'\in [a]$ such that $j\neq j'$ and $n_j=n_{j'}= 3$. This case is handled in Lemma~\ref{2bii}
             \item[2(b)(iii):] $\forall i\in [a],$ $ 1\leq n_i\leq 3$ and there is exactly one index $j\in [a]$ with $n_j= 3$,
         and $\exists l\in [a]\setminus \{j\}$ with $n_l=2$. This case is handled in Lemma~\ref{2biii}
          \item[2(b)(iv):] There is exactly one index $j\in [a]$ such that $n_j= 3$ and $P=[a]\setminus \{j\}$. This case is handled in Lemma~\ref{2biv}. 
          \item[2(b)(v):] $\forall i\in [a],$ $ 1\leq n_i\leq 2$. This case is handled in Lemma~\ref{2bv}. 
            \end{enumerate}
          \end{enumerate}
     \end{enumerate}
     This covers all the cases and proves Theorem~\ref{thmmain}. Lemmas~\ref{gzep},~\ref{2biii} and~\ref{2biv} are the most non-trivial among all lemmas in this paper and fully utilize the highly structured nature of $\gze$. In particular, the proof of Lemma~\ref{gzep} is especially interesting because it introduces fresh lower bounding ideas that go beyond the technique of simply finding an induced subgraph of large boxicity. Instead, it exploits the set of forbidden induced subgraphs of interval graphs hidden in $\Gamma_{E}(\mathbb{Z}_N)$ in a more targeted way. These three lemmas should be treated separately and considered the main contribution of this paper. 
     
     Finally, we prove Theorem~\ref{thmthgze} in a similar way by breaking it down into many cases. 
 \section{Preliminaries}

 All the graphs in this paper are finite and simple. We will denote the disjoint union of two sets with the symbol $\sqcup$. For a graph $G$, we denote by $\overline{G}$ the complement graph of $G$, i.e., the graph that has the same set of vertices $V(G)$ but with two distinct vertices adjacent if and only if they are not adjacent in $G$. For graphs $G_1$ and $G_2$ defined on disjoint sets of vertices, the join of $G_1$ and $G_2$, denoted as $G_1\vee G_2$, is the graph defined on the vertex set $V(G_1)\cup V(G_2)$ and with edge set $E(G_1)\cup E(G_2)\cup \{uv\colon u\in V(G_1), v\in V(G_2)\}$. 
 Given graphs $G$ and $H$, we say that graph $H$ is an \textit{induced subgraph} of $G$ if $V(H)\subseteq V(G)$ and $E(H)=\{uv\in E(G)\colon u,v\in V(H) \}$. Given a graph $G$ and a subset of vertices $A\subseteq V(G)$, \textit{the graph induced on $A$ in $G$} is the graph with $A$ as the set of vertices and the set $\{uv\in E(G)\colon u,v\in A \}$ as the set of edges. Given graphs $G$ and $H$, and a subset of vertices $A\subseteq V(G)$, we say that \textit{$A$ induces the graph $H$ in $G$} if the graph induced on $A$ in $G$ is isomorphic to $H$. For a positive integer $x$, we will use the notation $[x]$ to indicate the set $\{1,2,\ldots, x\}$. We will denote by $P_4$ the path on $4$ vertices, by $C_4$ the cycle on $4$ vertices and by $K_2$ the complete graph on two vertices, i.e., an edge. We will denote an interval graph and its interval representation by the same symbol when there is no confusion.

The following theorem is by Roberts~\cite{Roberts1969301} and will be used repeatedly throughout the paper (the statement in~\cite{Roberts1969301} is slightly different).
\begin{theorem}[\cite{Roberts1969301}]\label{Roberts}
    Suppose $G$ is a graph and let $K_G$ denote the complete graph defined on the vertices of $G$. Then the boxicity of $G$ is the minimum integer $t\geq0$ such that there exist interval graphs $\{I_i\}_{1\leq i\leq t}$ 
    with the property that for $1 \leq i\leq t$, $V(I_i)=V(G)$ and $E(G)=E(K_G)\cap E(I_1)\cap E(I_2)\cap \cdots \cap E(I_t)$. 
\end{theorem}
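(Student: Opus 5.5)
We argue by the two inequalities directly from the definitions, identifying a $d$-box representation with a $d$-tuple of interval representations via coordinate projection. Let $t^\ast$ denote the minimum $t$ in the statement. The first task is to dispose of the degenerate case $t=0$. Here the intersection $E(K_G)\cap E(I_1)\cap\cdots\cap E(I_t)$ is just $E(K_G)$, so $t^\ast=0$ exactly when $G$ is complete. This agrees with the convention that complete graphs have boxicity $0$. Conversely, $box(G)=0$ means $G$ is complete, since $0$-boxes are all the single point and any two of them intersect. From now on I assume $G$ is not complete, so both quantities are at least $1$.

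\emph{Upper bound, $box(G)\le t^\ast$.} Suppose interval graphs $I_1,\dots,I_t$ on $V(G)$ satisfy $E(G)=E(K_G)\cap\bigcap_i E(I_i)$. For each $i$ I fix an interval representation $v\mapsto [a_i(v),b_i(v)]$ of $I_i$, and assign to $v$ the box $B(v)=\prod_{i=1}^t [a_i(v),b_i(v)]$. A product of intervals is nonempty as an intersection exactly when each coordinate intersection is nonempty, so for distinct $u,v$ we get $B(u)\cap B(v)\ne\emptyset$ iff $uv\in E(I_i)$ for all $i$. By hypothesis this holds iff $uv\in E(G)$. Hence $G$ is the intersection graph of these $t$-boxes.

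\emph{Lower bound, $t^\ast\le box(G)$.} Take a $d$-box representation $v\mapsto \prod_{i=1}^d[a_i(v),b_i(v)]$ of $G$ with $d=box(G)$. For each $i$, let $I_i$ be the intersection graph of the projected intervals $[a_i(v),b_i(v)]$, $v\in V(G)$; this is an interval graph on $V(G)$. The same coordinatewise observation shows that distinct $u,v$ are adjacent in $G$ iff their intervals meet in every coordinate, i.e.\ iff $uv\in\bigcap_i E(I_i)$. Intersecting with $E(K_G)$ only discards the trivial pairs $u=v$. So $E(G)=E(K_G)\cap\bigcap_i E(I_i)$, which gives $t^\ast\le d$.

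There is no real obstacle. The only point needing care is bookkeeping: the graphs $I_i$ must be simple graphs on exactly the vertex set $V(G)$, which is why the statement intersects with $E(K_G)$. One should also check that distinct vertices with identical intervals cause no problem, which they do not, since we work with intersection graphs of indexed families rather than of sets. The key fact, used in both directions, is that axis-parallel boxes intersect if and only if all their coordinate projections intersect.
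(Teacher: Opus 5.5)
Your proof is correct. The paper does not prove this statement; it quotes it from Roberts, and your coordinate-projection argument (two boxes meet iff every coordinate projection meets), together with your handling of $t=0$ through the complete-graph convention, is the standard proof of that result.

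One small correction to your closing remark: each $I_i$ is simple, so $E(I_i)\subseteq E(K_G)$ already. The factor $E(K_G)$ is therefore there only to make the empty intersection (the case $t=0$) equal to all pairs of distinct vertices, not to discard pairs with $u=v$.
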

In Theorem~\ref{Roberts}, observe that when $G$ is the complete graph, the minimizing set of interval graphs is the empty set and hence $box(G)=t=0$. 

A sub-class of interval graphs called \textit{threshold graphs} will be of interest to us.
\begin{definition}[\cite{chvatal}]\label{thresholddef}
    A graph $G$ is said to be a threshold graph if there is a real number $S$ (the threshold) and a weight function $w\colon V(G)\rightarrow \mathbb{R}$ such that: $uv$ is an edge in $G$ if and only if $w(u) + w(v) \geq  S$. We denote this class of graphs by $TH$. 
\end{definition}
We also have the following equivalent characterization of the threshold graphs. 
\begin{theorem}[\cite{chvatal1973set}]
    A graph $G\in TH$ if and only if $G$ does not contain $P_4, C_4$ or $2K_2$ as an induced subgraph. Here $2K_2$ is the graph consisting of two disjoint edges.
\end{theorem}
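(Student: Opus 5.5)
The argument has two halves. The forward direction reduces to a single averaging inequality. For the converse, I would show that a graph with none of the three obstructions always has an isolated or a dominating vertex, and then build the weights inductively.

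\emph{Necessity.} Being a threshold graph is inherited by induced subgraphs: restrict the weight function and keep the same threshold $S$. So it suffices to show that $P_4$, $C_4$ and $2K_2$ are not threshold graphs. All three contain four vertices $a,b,c,d$ with $ab,cd$ edges and $ac,bd$ non-edges:
\begin{itemize}
\item in $2K_2$, take the two edges $ab$ and $cd$;
\item in the path $a\,b\,c\,d$, take the two end edges;
\item in the cycle $a\,b\,d\,c\,a$, take the edges $ab$ and $dc$.
\end{itemize}
If weights existed, adding the two edge inequalities would give $w(a)+w(b)+w(c)+w(d)\ge 2S$. Adding the two non-edge inequalities would give $w(a)+w(b)+w(c)+w(d)<2S$. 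This is a contradiction.

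\emph{Sufficiency, step 1: neighbourhoods are nested.} Let $G$ have no induced $P_4$, $C_4$ or $2K_2$. Suppose there are distinct $u,v$ with some $x\in N(u)\setminus N[v]$ and some $y\in N(v)\setminus N[u]$. Then $x\ne y$, the pairs $ux$ and $vy$ are edges, and $uy$ and $vx$ are non-edges. The two remaining pairs $uv$ and $xy$ determine the induced graph on $\{u,x,v,y\}$:
\begin{itemize}
\item if neither $uv$ nor $xy$ is an edge, it is $2K_2$;
\item if exactly one of them is an edge, it is $P_4$;
\item if both are edges, it is the cycle $u\,x\,y\,v\,u$, i.e.\ $C_4$.
\end{itemize}
Hence for any two vertices, $N(u)\setminus N[v]=\emptyset$ or $N(v)\setminus N[u]=\emptyset$.

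\emph{Sufficiency, step 2: an isolated or dominating vertex exists.} Suppose $G$ has no dominating vertex. Let $v$ have maximum degree and pick $w\notin N[v]$. Assume, for contradiction, that $w$ has a neighbour $x$. Then $x\ne v$, since $w$ is not adjacent to $v$.
\begin{itemize}
\item Since $v\notin N(w)$, we have $N(w)\setminus N[v]=N(w)\setminus N(v)$. By step 1, either this set is empty, or $N(v)\setminus N[w]=N(v)$ is empty.
\item In the first case $N(w)\subseteq N(v)$. In the second case $v$ is isolated; being of maximum degree, $G$ is then edgeless, so $w$ has no neighbour.
\item So $N(w)\subseteq N(v)$, which gives $x\in N(v)$.
\item Now $w\in N(x)\setminus N[v]$, so step 1 forces $N(v)\setminus N[x]=\emptyset$, i.e.\ $N(v)\setminus\{x\}\subseteq N(x)$.
\item Therefore $N(x)\supseteq (N(v)\setminus\{x\})\cup\{v,w\}$, so $\deg x\ge \deg v+1$. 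This contradicts the maximality of $\deg v$.
\end{itemize}
So $w$ is isolated.

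\emph{Sufficiency, step 3: induction on $|V(G)|$.} The forbidden-subgraph hypothesis passes to $G-z$ for any vertex $z$. By induction, $G-z$ has weights $w$ and a threshold $S$. Let $m$ and $M$ be the minimum and maximum of $w$ on $V(G)\setminus\{z\}$.
\begin{itemize}
\item If $z$ is isolated, set $w(z)=S-M-1$. Then $w(z)+w(u)\le S-1<S$ for every other vertex $u$, so $z$ gets no edges.
\item If $z$ is dominating, set $w(z)=S-m$. Then $w(z)+w(u)\ge S$ for every other vertex $u$, so $z$ is adjacent to everything.
\end{itemize}
The pairs not involving $z$ are unaffected, so $G\in TH$. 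The base case is the one-vertex graph.

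The main obstacle is step 2, extracting an isolated or dominating vertex. The key is choosing $v$ of maximum degree so that the nesting from step 1 can be turned into a degree contradiction.
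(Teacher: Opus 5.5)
The paper gives no proof of this statement; it is quoted from Chv\'atal and Hammer, so there is no in-paper argument to compare against. Your route is a standard and sound one: nested neighbourhoods, then an isolated or dominating vertex, then extending the weights one vertex at a time. Steps 1 and 3 are correct as written, but two local steps are wrong as stated.

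In the necessity part, your labelling of $C_4$ does not have the property you claim. In the cycle $a\,b\,d\,c\,a$ the pairs $ca$ and $bd$ are cycle edges, so the inequalities $w(a)+w(c)<S$ and $w(b)+w(d)<S$ that you add up are false. Label the cycle $a\,b\,c\,d\,a$ instead: then $ab,cd$ are edges and the diagonals $ac,bd$ are the non-edges. Alternatively, keep your labelling and use the non-edges $ad,bc$. Either way the averaging contradiction then goes through unchanged.

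In step 2, the identity $N(v)\setminus N[w]=N(v)$ is false. Since $w\notin N(v)$, you only get $N(v)\setminus N[w]=N(v)\setminus N(w)$, which is empty exactly when $N(v)\subseteq N(w)$, not when $v$ is isolated. So your second case is mis-analysed, and the situation $\emptyset\neq N(v)\subseteq N(w)$ is not handled.

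The repair is one line. In that case maximality of $\deg v$ gives $|N(w)|\le |N(v)|$, hence $N(w)=N(v)$. So $N(w)\subseteq N(v)$ holds in both cases, and the rest of step 2 (step 1 applied to $x,v$ and the count $\deg x\ge\deg v+1$) is correct.

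Alternatively, you can drop that bullet entirely. Applying step 1 to $x$ and $v$ gives $N(v)\subseteq N[x]$, and then $\deg x\ge \deg v+1$ whether or not $x\in N(v)$:
\begin{itemize}
\item if $x\notin N(v)$, then $N(v)\subseteq N(x)$ and $w\in N(x)\setminus N(v)$;
\item if $x\in N(v)$, use your count.
\end{itemize}
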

If we replace ``interval graphs'' with ``threshold graphs'' in the statement of Theorem~\ref{Roberts}, we get the notion of \textit{the threshold dimension} of a graph. The representation of graphs as the intersection of ``special graphs'' (in the sense of Theorem~\ref{Roberts}) was studied by Kratochvíl and Tuza.~\cite{Kratochvil}. 
\begin{definition}\label{thresholddim}
    Let $G$ be a graph and let $K_{G}$ be the complete graph defined on the vertices of $G$. The threshold dimension of $G$ (denoted as $\dimt(G)$) is defined as the minimum integer $t\geq 0$ such that there exist threshold graphs $\{G_i\}_{1\leq i\leq t}$ with the following property: for $1\leq i\leq t$, $V(G_i)=V(G)$ and $E(G)=E(K_{G})\cap E(G_1)\cap E(G_2)\cap \cdots \cap E(G_t)$.
\end{definition}
As in Theorem~\ref{Roberts}, observe that when $G$ is the complete graph in Definition~\ref{thresholddim}, the minimizing set of threshold graphs is the empty set and hence $\dimt(G)=t=0$.
\begin{remark}
    In the literature, the phrase ``threshold dimension'' is sometimes used to denote the minimum number of threshold graphs required to \emph{cover} the edges of a graph. We will not use the phrase in this way but in the way of Definition~\ref{thresholddim} which is consistent with the notion of \textit{the intersection dimension} of graphs as defined by Kratochvíl and Tuza~\cite{Kratochvil}.
\end{remark}

Since threshold graphs are also interval graphs, the threshold dimension of a graph is an upper bound on its boxicity.
\begin{observation}\label{boxth}
    For any graph $G$, $box(G)\leq \dimt(G)$. 
\end{observation}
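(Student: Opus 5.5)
This is immediate from the definitions, so I will compare a threshold representation with an interval representation directly. The plan is to take an optimal threshold representation, show that every threshold graph is an interval graph, and then read the same family as an interval representation. Theorem~\ref{Roberts} then bounds $box(G)$ by the number of graphs used.

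\textbf{Step 1: setting up the optimal representation.} Let $t=\dimt(G)$. By Definition~\ref{thresholddim}, there are threshold graphs $G_1,\ldots,G_t$ on $V(G)$ such that
\[
E(G)=E(K_G)\cap E(G_1)\cap\cdots\cap E(G_t).
\]
If $t=0$, then $G$ is complete, and $box(G)=0$ by convention.

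\textbf{Step 2: every threshold graph is an interval graph.} Let $H$ be a threshold graph with weights $w$ and threshold $S$. Reindexing so that $w(v_1)\le\cdots\le w(v_n)$, the neighbourhoods are nested: if $u$ has larger weight than $v$, then $N(v)\subseteq N[u]$. I will build intervals explicitly. Let $M$ be a large constant exceeding every $|w(v)|$ and $|S|$, and assign
\[
I_v=
\begin{cases}
[\,S-w(v),\;M\,] & \text{if } 2w(v)\ge S,\\
[\,w(v),\;w(v)\,] & \text{otherwise.}
\end{cases}
\]
I then check the three kinds of pairs.
\begin{itemize}
\item Two vertices with $2w\ge S$ are adjacent, since $w(u)+w(v)\ge S$. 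Their intervals share the point $M$, so they intersect.
\item Two vertices with $2w<S$ are non-adjacent. Their intervals are distinct points $w(u)\neq w(v)$, so they do not intersect. This needs the weights to be distinct, which can be arranged by a small perturbation that preserves all strict inequalities.
\item For a mixed pair, with $2w(u)\ge S$ and $2w(v)<S$, the intervals intersect iff $w(v)\in[S-w(u),M]$. Since $w(v)<M$, this holds iff $w(u)+w(v)\ge S$, which is exactly adjacency.
\end{itemize}
Hence every $G_i$ is an interval graph. Alternatively, one could invoke the characterization that threshold graphs have no induced $C_4$ or $2K_2$, which makes them chordal and co-comparability, hence interval. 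The explicit construction is cleaner.

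\textbf{Step 3: concluding.} The family $G_1,\ldots,G_t$ now consists of interval graphs satisfying the condition of Theorem~\ref{Roberts}. Therefore $box(G)\le t=\dimt(G)$.

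The only point needing care is the perturbation of weights in the second case of Step 2. To handle it, I will choose a finite $S$ and add tiny distinct offsets to the weights, smaller than the minimum gap $|w(u)+w(v)-S|$ over the non-adjacent pairs; for those pairs the gap is strictly positive, so it has a positive minimum.
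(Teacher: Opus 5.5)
Your proof is correct and follows the paper's route: the paper just notes that threshold graphs are interval graphs and applies Theorem~\ref{Roberts}, while you also verify that containment with an explicit interval model, which checks out. Two small repairs are needed. First, in your perturbation the offsets should be nonnegative (so edges with $w(u)+w(v)=S$ survive), each below half the minimum non-edge gap, and each below the least nonzero weight difference (so distinct offsets really yield distinct weights). Second, in your aside, forbidding $C_4$ and $2K_2$ alone does not give chordality (consider $C_5$); you also need $P_4$-freeness.
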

Removing vertices from a graph does not increase its boxicity (threshold dimension) because the same set of interval graphs (threshold graphs) without the removed vertices can be used. This fact is captured by the following observation.
\begin{observation}\label{sub}
    Let $G$ be a graph and $H$ be an induced subgraph of $G$. Then $box(G)\geq box(H)$ and $\dimt(G)\geq \dimt(H)$.
\end{observation}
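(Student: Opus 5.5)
The statement just above is Observation~\ref{sub}. Its proof is a direct restriction argument, run in parallel for boxicity (via Theorem~\ref{Roberts}) and for threshold dimension (via Definition~\ref{thresholddim}). The plan is to take an optimal representation of $G$ and restrict every ingredient to $V(H)$. This needs two closure facts:
\begin{itemize}
\item induced subgraphs of interval graphs are interval graphs (delete the intervals of the removed vertices);
\item induced subgraphs of threshold graphs are threshold graphs (restrict the weight function $w$ and keep the same threshold $S$).
\end{itemize}

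For boxicity, let $t=box(G)$ and fix interval graphs $I_1,\ldots,I_t$ on $V(G)$ with $E(G)=E(K_G)\cap E(I_1)\cap\cdots\cap E(I_t)$. Let $I_i'$ be the subgraph of $I_i$ induced on $V(H)$; it is an interval graph on $V(H)$. Now take distinct $u,v\in V(H)$. Since $H$ is induced, $uv\in E(H)$ if and only if $uv\in E(G)$. That holds if and only if $uv\in E(I_i)$ for every $i$, which is the same as $uv\in E(I_i')$ for every $i$. Hence
\[
E(H)=E(K_H)\cap E(I_1')\cap\cdots\cap E(I_t').
\]
By Theorem~\ref{Roberts} this gives $box(H)\le t$.

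The degenerate case $t=0$ needs a separate check. Then $G$ is complete, so $H$ is complete and $box(H)=0$, consistent with the convention stated after Theorem~\ref{Roberts}. The threshold dimension is handled identically, using the second closure fact and Definition~\ref{thresholddim}.

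I expect no real obstacle. The only points needing care are that $H$ must be induced (otherwise the equivalence $uv\in E(H)\iff uv\in E(G)$ fails) and that the empty-family convention for complete graphs is handled.
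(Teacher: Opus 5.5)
Your proposal is correct and follows the paper's argument: the paper justifies this observation in one line, noting that the same interval (threshold) graphs restricted to the remaining vertices still give a valid representation. Your write-up spells out that argument in full, including the two closure facts, the edge-by-edge check that uses inducedness, and the $t=0$ case.
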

We also have the following observation.
\begin{observation}\label{join}
    Suppose $G_1,G_2$ are two graphs with disjoint vertex sets. Then,
    \begin{itemize}
        \item $box(G_1\vee G_2)= box(G_1)+box(G_2)$ and,
        \item $\dimt(G_1\vee G_2)= \dimt(G_1)+\dimt(G_2)$.
        \end{itemize}
\end{observation}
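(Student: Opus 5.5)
The plan is to prove the two bullets in parallel using Theorem~\ref{Roberts} and Definition~\ref{thresholddim}. Write $G=G_1\vee G_2$. Both bullets follow from a single principle. An interval graph (resp. threshold graph) $I$ on $V(G)$ that contains $E(G)$ restricts to an interval (resp. threshold) supergraph of $G_1$ on $V(G_1)$ and of $G_2$ on $V(G_2)$. Moreover, every non-edge of $G$ lies inside $V(G_1)$ or inside $V(G_2)$. Throughout, the graphs in a representation are intersected with $E(K_G)$, so only non-edges need to be handled.

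\emph{Upper bound.} Let $\{I_1,\dots,I_s\}$ represent $G_1$ and $\{J_1,\dots,J_t\}$ represent $G_2$, each as in Theorem~\ref{Roberts}. Extend each $I_i$ to $V(G)$ by assigning every vertex of $V(G_2)$ an interval that contains all intervals of $I_i$, for example one large interval covering the whole range. This keeps the graph an interval graph, makes $V(G_2)$ a clique, and makes $V(G_2)$ complete to $V(G_1)$. Extend each $J_j$ to $V(G)$ symmetrically. The intersection of the $s+t$ extended graphs then has non-edges exactly $\overline{E(G_1)}\cup\overline{E(G_2)}$, which are exactly the non-edges of $G$. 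For threshold graphs, give each vertex of $V(G_2)$ a weight so large that every sum involving it exceeds $S$. Adding universal vertices to a threshold graph keeps it threshold, so the same construction works.

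\emph{Lower bound.} This is the main step. Take a representation $I_1,\dots,I_k$ of $G$ with $k$ minimal. Call $I_\ell$ \emph{useful for $G_1$} if it misses some non-edge of $G_1$, and define useful for $G_2$ similarly. If no $I_\ell$ is useful for both, then the graphs useful for $G_1$, restricted to $V(G_1)$, represent $G_1$, and likewise for $G_2$; this gives $k\ge box(G_1)+box(G_2)$. So the key claim is that no interval graph $I\supseteq E(G)$ can miss both a non-edge $ab$ with $a,b\in V(G_1)$ and a non-edge $cd$ with $c,d\in V(G_2)$. Suppose it did. All four cross pairs $ac,ad,bc,bd$ are edges of $G$, so they are edges of $I$, and $\{a,c,b,d\}$ induces a $C_4$ in $I$. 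This contradicts the chordality of interval graphs. For threshold graphs the same argument applies, since $C_4$ is a forbidden induced subgraph of $TH$.

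The one remaining subtlety is the edge case where $G_1$ or $G_2$ is complete, so that its dimension is $0$ and it has no non-edges. Then the claim is vacuous and the partition argument still goes through. One must also check that a restriction of a graph useful for $G_1$ may miss no edge of $G_1$. This holds because each $I_\ell$ contains all of $E(G)$.
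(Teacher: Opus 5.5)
Your proof is correct and complete in both directions, including the degenerate case where one side is complete. The paper states this as an observation without proof, so there is no argument of its own to compare against. What you give is the natural argument:
\begin{itemize}
\item For the upper bound, you pad each representation of one side with universal vertices, using one interval covering everything or a very large weight.
\item For the lower bound, a non-edge $ab$ of $G_1$ and a non-edge $cd$ of $G_2$ missed by the same interval or threshold supergraph would induce the $4$-cycle $a,c,b,d$ in it. So the family splits into graphs serving $G_1$ and graphs serving $G_2$. Restricting each part, using heredity of interval and threshold graphs, then gives representations of $G_1$ and $G_2$.
\end{itemize}

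This is the same obstruction the paper later packages as the $C_4$ conflict graph in Observation~\ref{conflict}. Indeed $\phi(G_1\vee G_2)=\phi(G_1)\vee\phi(G_2)$, but coloring that graph would only give $box(G_1\vee G_2)\ge\chi(\phi(G_1))+\chi(\phi(G_2))$. What yields exact additivity is your step of splitting the family of interval graphs itself, rather than merely coloring the non-edges. The paper only ever uses the inequality $\ge$ (in Theorem~\ref{th} and Lemma~\ref{2biii}), which is exactly your lower-bound argument. Minimality of $k$ is not needed there: any representation works.
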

Observations~\ref{sub} and ~\ref{join} can be exploited to get lower bounds for the boxicity of graphs. This is usually done by looking for induced subgraphs whose boxicity is well known or expressible as the join of two graphs. Such a graph, known as the ``Roberts' graph'', will be used later in our proofs.
\begin{definition}[Roberts' Graph]
    Let $G$ be the graph obtained by removing a perfect matching from a complete graph on $2t$ vertices. This graph is known as \emph{the Roberts' graph} on $2t$ vertices. We denote it by $\overline{tK_2}$ (the complement graph of $t$ many isolated edges).  
\end{definition}
The following observation is not hard to show and was used by Roberts in his pioneering paper to demonstrate the existence of graphs on $n$ vertices with boxicity $\lfloor\frac{n}{2}\rfloor$.
\begin{observation}\label{rob}
    Let $\overline{tK_2}$ be the Roberts' graph on $2t$ vertices. Then $\dimt(\overline{tK_2})\geq box(\overline{tK_2})=t$. 
\end{observation}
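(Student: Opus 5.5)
We prove Observation~\ref{rob} in two parts: an upper bound $box(\overline{tK_2})\leq t$ by an explicit construction, and a matching lower bound $box(\overline{tK_2})\geq t$ by a counting argument on non-edges. The inequality $\dimt(\overline{tK_2})\geq box(\overline{tK_2})$ is then immediate from Observation~\ref{boxth}.

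Label the vertices $u_1,v_1,\ldots,u_t,v_t$, where the removed perfect matching is $\{u_iv_i\colon i\in[t]\}$. For the upper bound we use Theorem~\ref{Roberts}. For each $i\in[t]$, let $I_i$ be the interval graph in which $u_i$ gets the interval $[0,1]$, $v_i$ gets $[2,3]$, and every other vertex gets $[0,3]$. In $I_i$ the only non-edge is $u_iv_i$. Hence $E(K_G)\cap E(I_1)\cap\cdots\cap E(I_t)$ is exactly the complete graph minus the matching, so $box(\overline{tK_2})\leq t$.

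For the lower bound, suppose $\overline{tK_2}$ is the intersection of interval graphs $I_1,\ldots,I_s$ on the same vertex set, as in Theorem~\ref{Roberts}. Each non-edge $u_iv_i$ must be a non-edge in at least one $I_j$, so it is enough to show that no single $I_j$ misses two distinct pairs $u_iv_i$ and $u_kv_k$; that forces $s\geq t$. Suppose an interval graph $I$ contains all edges of $\overline{tK_2}$ but is missing both $u_iv_i$ and $u_kv_k$. Then $u_i,u_k,v_i,v_k$ induce in $I$ a subgraph whose edge set contains the four cross pairs $u_iu_k$, $u_iv_k$, $v_iu_k$, $v_iv_k$, and which lacks $u_iv_i$ and $u_kv_k$. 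This induced subgraph is therefore exactly the $4$-cycle $u_i u_k v_i v_k u_i$. But $C_4$ is not an interval graph, since interval graphs are chordal, and induced subgraphs of interval graphs are interval graphs. This contradiction shows that each $I_j$ misses at most one matching pair, so $s\geq t$.

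None of these steps is a real obstacle. The only point to check carefully is that the four cross pairs are indeed edges of $\overline{tK_2}$, which holds because they are not matching pairs, and that this yields an induced $C_4$ in the interval graph.
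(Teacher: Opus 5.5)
Your proof is correct and is the standard argument. The paper gives no proof of its own, attributing the fact to Roberts, but your lower bound is exactly the reasoning of Observation~\ref{conflict}: the $C_4$ conflict graph $\phi(\overline{tK_2})$ is the complete graph $K_t$ on the $t$ matching non-edges, and your upper bound (with $t$ interval graphs each omitting one matching pair) together with the appeal to Observation~\ref{boxth} for $\dimt(\overline{tK_2})\geq box(\overline{tK_2})$ completes the argument as expected.
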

The following theorem serves as another lower bounding technique and was developed by Cozzens and Roberts~\cite{cozzens1983computing}.
  \begin{theorem}[\cite{cozzens1983computing}]\label{obs2}
   Let $G$ be a graph and $p\geq 1$. Suppose that $V(G)$ contains two disjoint sets of vertices $S_1=\{a_1, a_2, \ldots, a_{2p-1}\}$ and $S_2 = \{b_1, b_2,\ldots , b_{2p-1}\}$, and suppose that the only edges between $S_1$ and $S_2$ in $\overline{G}$ are $\{a_ib_i\colon i\in [2p-1]\}$. Then $box(G)\geq p$. 
\end{theorem}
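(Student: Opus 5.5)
The plan is to argue by contradiction via Theorem~\ref{Roberts}, combining a pigeonhole count with a short interval argument. Suppose $box(G)=t\leq p-1$. By Theorem~\ref{Roberts} there are interval graphs $I_1,\ldots,I_t$ on $V(G)$ with $E(G)=E(K_G)\cap E(I_1)\cap\cdots\cap E(I_t)$. (If $t=0$, then $G$ is complete, contradicting the existence of the non-edge $a_1b_1$. So we may assume $t\geq 1$.)

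\textbf{Step 1 (pigeonhole).} The hypotheses give two facts about pairs between $S_1$ and $S_2$:
\begin{itemize}
\item Each pair $a_ib_i$, $i\in[2p-1]$, is a non-edge of $G$, so it must be a non-edge in at least one $I_j$.
\item Each pair $a_ib_k$ with $i\neq k$ is an edge of $G$, so it is an edge in every $I_j$.
\end{itemize}
Assign each of the $2p-1$ non-edges $a_ib_i$ to some $I_j$ in which it is missing. Since $2p-1>2t$, some single interval graph $I_j$ receives at least three of them. Fix such an $I_j$ and three indices $i_1,i_2,i_3$ with $a_{i_s}b_{i_s}\notin E(I_j)$ for $s=1,2,3$. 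For all $s\neq s'$ we still have $a_{i_s}b_{i_{s'}}\in E(I_j)$.

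\textbf{Step 2 (orientation).} In an interval representation of $I_j$, write $A_s$ for the interval of $a_{i_s}$ and $B_s$ for the interval of $b_{i_s}$. For each $s$, the intervals $A_s$ and $B_s$ are disjoint, so one lies entirely to the left of the other. There are only two possible orientations and three indices, so two of them, say $s=1,2$, have the same orientation. By symmetry, assume $A_s$ lies left of $B_s$ for $s=1,2$. Let $r_s$ be the right endpoint of $A_s$ and $l_s$ the left endpoint of $B_s$, so that $r_s<l_s$.

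\textbf{Step 3 (contradiction).} Since $A_1$ meets $B_2$, we get $r_1\geq l_2$. Since $A_2$ meets $B_1$, we get $r_2\geq l_1$. Chaining these gives
$$r_1\geq l_2>r_2\geq l_1>r_1,$$
which is impossible. Hence $t\geq p$, that is, $box(G)\geq p$.

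The only genuinely delicate point is Step 2: the two disjoint intervals of a pair may appear in either order. This is exactly why three pairs, rather than two, are needed in a single $I_j$, and why the count is $2p-1$ rather than $2p$. Note also that no constraints among vertices of $S_1$, or among vertices of $S_2$, are used anywhere in the argument.
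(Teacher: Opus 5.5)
Your proof is correct and complete. The pigeonhole step correctly forces three of the missing edges $a_ib_i$ into a single $I_j$ (the $t=0$ case is handled separately), and the endpoint chain $r_1\geq l_2>r_2\geq l_1>r_1$ rules out two equally oriented pairs.

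The paper does not prove this statement; it quotes it from Cozzens and Roberts. Your argument is essentially the classical one in dual form. In their language, each co-interval graph covering part of $E(\overline{G})$ can contain at most two of the matching edges, because two equally oriented ones would form a forbidden $2+2$ in the associated interval order.
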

\begin{corollary}\label{6graph}
    Let $G$ be a graph having $V(G)=\{v_1, v_2, \ldots, v_6\}$ such that:
    \begin{itemize}
        \item $v_1,v_2,v_3,v_4,v_5, v_6$ form a cycle in $G$, and
        \item $v_1v_4,v_2v_5,v_3v_6\notin E(G)$.
    \end{itemize}
    Then $box(G)>1$.
\end{corollary}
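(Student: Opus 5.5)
We apply Theorem~\ref{obs2} with $p=2$, so that $|S_1|=|S_2|=3$; the conclusion $box(G)\geq 2$ is exactly $box(G)>1$. The work is to choose $S_1$ and $S_2$ among $v_1,\ldots,v_6$ so that the only non-edges of $G$ between them are three disjoint pairs, one through each vertex of $S_1$.

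First we record the full edge set of $G$. Since $V(G)$ has exactly six vertices, the possible non-edges are the $15$ pairs minus the $6$ cycle edges $v_1v_2,v_2v_3,\ldots,v_6v_1$. This leaves the three \emph{long diagonals} $v_1v_4, v_2v_5, v_3v_6$ and the six \emph{short diagonals} $v_iv_{i+2}$ (indices mod $6$). The hypothesis guarantees the long diagonals are non-edges. The short diagonals are not constrained, so $G$ is only determined up to them.

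We must choose the sets so that no short diagonal runs between $S_1$ and $S_2$. We take $S_1=\{v_1,v_3,v_5\}$ and $S_2=\{v_4,v_6,v_2\}$, with the matching $a_1=v_1,b_1=v_4$, $a_2=v_3,b_2=v_6$, $a_3=v_5,b_3=v_2$. Every short diagonal $v_iv_{i+2}$ joins two vertices of the same parity, so it lies inside $S_1$ or inside $S_2$. Hence each pair between $S_1$ and $S_2$ is either a cycle edge, which is an edge of $G$, or a long diagonal. The long diagonals are exactly $v_1v_4, v_3v_6, v_5v_2$, i.e., the pairs $a_ib_i$, and they are non-edges by hypothesis.

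So the only edges of $\overline{G}$ between $S_1$ and $S_2$ are $\{a_ib_i : i\in[3]\}$, and Theorem~\ref{obs2} with $p=2$ gives $box(G)\geq 2$. The only step needing care is this choice of the parity split, which is what makes the argument valid whatever the status of the short diagonals. Once that split is fixed, the verification is immediate.
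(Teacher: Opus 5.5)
Your proof is correct and is exactly the paper's argument: apply Theorem~\ref{obs2} with $p=2$, $S_1=\{v_1,v_3,v_5\}$ and $S_2=\{v_4,v_6,v_2\}$. Your parity check (every pair across the split is either a cycle edge or one of the three long diagonals) supplies the verification that the paper leaves to the reader.
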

\begin{proof}[Proof of Corollary]
    In the statement of Theorem~\ref{obs2}, take $S_1=\{v_1,v_3,v_5\}$ and $S_2=\{v_4,v_6,v_2\}$. 
\end{proof}
\section{Results on $\dimt(\gz)$}
In this brief section we resolve a corner case left open in \cite{chandran2026boxicity}. They showed the following theorem.
\begin{theorem}[\cite{chandran2026boxicity}]\label{tho}
   Suppose $N=\prod_{i=1}^a p_i^{n_i}$, where $n_i>0$ and $a\geq 2$, be the prime factorization of $N$ with $p_1<p_2< \cdots < p_a$. Then,  \begin{enumerate}
         \item If $N\not\equiv2\pmod 4$ then $\dimt(\Gamma(\mathbb{Z}_N))=  a$.
         \item  If $N\equiv 2\pmod 4$ and $n_j\geq 3$ for some $2\leq j\leq a$ then $\dimt(\Gamma(\mathbb{Z}_N))=  a$.
         \item  If $N\equiv 2\pmod 4$ and $n_i\leq 2$ for $2\leq i\leq a$ then $a-1\leq \dimt(\Gamma(\mathbb{Z}_N))\leq  a.$
         \item If $N\equiv 2\pmod 4 \t{ and } n_i= 1 \text{ for }2\leq i\leq a$ then $\dimt(\gz)=a-1$.
     \end{enumerate} 
\end{theorem}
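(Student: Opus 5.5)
The plan is to prove the four items by pairing one upper-bound construction with lower bounds from Observation~\ref{rob} and Observation~\ref{boxth}. Write $v_i(x)$ for the exponent of $p_i$ in $x$, capped at $n_i$.

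\textbf{Upper bound $\dimt(\gz)\le a$.} For every $i\in[a]$ define a threshold graph $G_i$ on $V(\gz)$ by the weight $w_i(x)=v_i(x)$ and threshold $S_i=n_i$. Two distinct zero divisors $x,y$ satisfy $xy\equiv 0\pmod N$ if and only if $v_i(x)+v_i(y)\ge n_i$ for every $i$. Hence $E(\gz)=E(K_{\gz})\cap\bigcap_i E(G_i)$, and $\dimt(\gz)\le a$ in all four items.

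\textbf{Lower bounds via a Roberts' graph.} By Observations~\ref{boxth}, \ref{sub} and \ref{rob}, it suffices to find an induced $\overline{tK_2}$ with $t=a$ for items 1 and 2, and with $t=a-1$ for item 3. For an index $i$ with $n_i\ge 2$, take $x_i=N/p_i$ and $y_i=N/p_i^{n_i}$; these are distinct.
\begin{itemize}
\item For $i\ne j$, the products $x_ix_j$, $y_iy_j$ and $x_iy_j$ are all $\equiv 0 \pmod N$, so these pairs are adjacent.
\item The product $x_iy_i$ has $p_i$-exponent $n_i-1<n_i$, so $x_i$ and $y_i$ are non-adjacent.
\end{itemize}
For an index $i$ with $n_i=1$ and $p_i$ odd, take $x_i=N/p_i$ and $y_i=uN/p_i$ with $u\not\equiv 0,1\pmod{p_i}$. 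These are distinct, and $x_iy_i\not\equiv0$. Moreover $y_i$ has the same valuations as $x_i$, so all cross adjacencies are preserved. This gives an induced $\overline{aK_2}$ whenever every index with $n_i=1$ has $p_i$ odd, i.e.\ $N\not\equiv 2\pmod 4$, which proves item 1. Dropping the index $i=1$ gives $\overline{(a-1)K_2}$, which is the lower bound in item 3.

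\textbf{Item 2, the main obstacle.} Here $p_1=2$ and $n_1=1$, so the index $1$ has no two distinct representatives of the pattern $N/2$. The plan is to borrow room from the index $j$ with $n_j\ge 3$. We keep the pairs for all $i\ne 1$, except that the pair for $j$ is replaced by intermediate valuations $1$ and $n_j-2$ (or similar). This frees valuations $n_j-1$ and $1$ at $p_j$ for a new non-adjacent pair $x_1,y_1$, which is built so that:
\begin{itemize}
\item $x_1$ and $y_1$ fail adjacency only at the prime $2$, with both odd;
\item they have enough $p_j$-valuation to be adjacent to everything else.
\end{itemize}
If an exact Roberts' graph does not fit, the fallback is the threshold-specific argument. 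We find $a$ non-edges $e_1,\dots,e_a$ such that any two of them, together with the edges between them, induce $C_4$, $P_4$ or $2K_2$ in any supergraph that omits both non-edges. Then each threshold graph can omit at most one $e_k$, which forces $a$ threshold graphs.

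\textbf{Item 4.} Here every $n_i=1$ and $p_1=2$. The plan is to merge the prime $2$ into the threshold graph of another prime. A vertex with $v_1=0$ is odd, and two odd vertices can never both satisfy all remaining conditions unless they are equal up to a unit factor coprime to $N$. This should let $G_1$ and $G_2$ be replaced by a single threshold graph with weight $w(x)=2v_2(x)+v_1(x)$ and a suitably chosen threshold, giving $\dimt\le a-1$. The lower bound $a-1$ then comes from the item~3 Roberts' graph.
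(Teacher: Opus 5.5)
The paper gives no proof of this statement; it quotes it from \cite{chandran2026boxicity}. Judged on its own, your upper bound $\dimt(\Gamma(\mathbb{Z}_N))\le a$ and your Roberts'-graph lower bounds for items 1 and 3 are correct. Item 2 is only a plan, and ``frees valuations $n_j-1$ and $1$'' is wrong as written: $y_1$ must have $p_j$-valuation at least $n_j-1$, otherwise $y_1x_j$ fails at $p_j$. It is easily completed. Take $x_1=N/2$, $y_1=N/(2p_j)$ and $x_j=N/p_j^{n_j-1}$. Take $y_j=N/p_j^{2}$ if $n_j\ge 4$, and $y_j=(1+p_j)N/p_j^{2}$ if $n_j=3$. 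Keep the other pairs. Then $x_jy_j$ has $p_j$-valuation $n_j-1$, $y_1y_j$ has $2n_j-3\ge n_j$, and all other cross products are divisible by $N$. Alternatively, the lower bounds in items 1--3 follow at once from Theorem~\ref{thmgz} and Observation~\ref{boxth}.

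The genuine gap is item 4. Your key claim about odd vertices is false. In $\mathbb{Z}_{30}$ the odd vertices $3$ and $15$ are distinct and not unit multiples of each other. They satisfy the conditions at $3$ and $5$, and are non-adjacent only because of the prime $2$. Worse, for $a\ge 3$ \emph{no} threshold graph can replace $G_1,G_2$ while $G_3,\dots,G_a$ are kept. Let $x=N/2$, $x'=N/(2p_3)$, $y=N/p_2$ and $y'=uN/p_2$ with $u\not\equiv 0,1\pmod{p_2}$. Then $x\,y\,x'\,y'$ is an induced $C_4$ in $\Gamma(\mathbb{Z}_N)$. The non-edge $xx'$ fails only at $2$, the non-edge $yy'$ fails only at $p_2$, and both pairs are edges of every $G_k$ with $k\ge 3$. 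In $\mathbb{Z}_{30}$ these vertices are $15,10,3,20$. So the merged graph must omit both pairs, and then it contains an induced $C_4$, which a threshold graph cannot.

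What is true is that $N/2$ is the only odd vertex with valuation $1$ at every odd prime. The repair is to use your weight for \emph{every} odd prime. For each $i\in\{2,\dots,a\}$ take the threshold graph with weight $2v_i(x)+v_1(x)$ and threshold $3$, and drop $G_1$.
\begin{itemize}
\item An edge of $\Gamma(\mathbb{Z}_N)$ gives weight sum at least $2+1=3$ in each of these graphs.
\item A non-edge failing at an odd prime $p_i$ has sum at most $2$ in the $i$-th graph.
\item A non-edge failing only at $2$ consists of two distinct odd vertices, not both equal to $N/2$. So for some $i\ge 2$ one has $v_i=1$ and the other $v_i=0$, giving sum $2<3$ in the $i$-th graph.
\end{itemize}
This gives $\dimt\le a-1$. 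The parity non-edges have to be spread over all $a-1$ graphs rather than absorbed by one.
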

\noindent Here we exactly determine $\dimt(\gz)$ as shown below.
\begin{theorem}\label{th}
   Suppose $N=\prod_{i=1}^a p_i^{n_i}$, where $n_i>0$ and $a\geq 2$, be the prime factorization of $N$ with $p_1<p_2< \cdots < p_a$. Then,  \begin{enumerate}
         \item If $N\not\equiv2\pmod 4$ then $\dimt(\Gamma(\mathbb{Z}_N))=  a$.
         \item  If $N\equiv 2\pmod 4$ and $n_j\geq 2$ for some $2\leq j\leq a$ then $\dimt(\Gamma(\mathbb{Z}_N))=  a$.
         \item If $N\equiv 2\pmod 4 \t{ and } n_i= 1 \text{ for }2\leq i\leq a$ then $\dimt(\gz)=a-1$.
     \end{enumerate} 
\end{theorem}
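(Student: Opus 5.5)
Comparing with Theorem~\ref{tho}, items 1 and 3 of Theorem~\ref{th} are already known (items 1 and 4 there). Item 2 is also known when some $n_j\geq 3$ with $2\leq j\leq a$. So the only case left is $N\equiv 2\pmod 4$ (that is, $p_1=2$ and $n_1=1$), $n_i\leq 2$ for all $2\leq i\leq a$, and $n_j=2$ for at least one $j\geq 2$. In this case Theorem~\ref{tho} gives $a-1\leq \dimt(\gz)\leq a$, so it is enough to show $\dimt(\gz)\geq a$. Boxicity cannot be used for this, because Theorem~\ref{thmgz} gives $box(\gz)=a-1$ here. The argument therefore has to use a feature of threshold graphs that interval graphs lack, namely that threshold graphs also exclude induced $P_4$.

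The plan is to find an induced subgraph $H$ of $\gz$ and prove $\dimt(H)\geq a$ via Observation~\ref{sub}. For this, $H$ should split as a join $H_1\vee H_2$ and we apply Observation~\ref{join}. For each prime index $i\notin\{1,j\}$, take the pair $u_i=N/p_i^{n_i}$ and $v_i=p_i^{n_i}$. Their product is $0$, so they are adjacent, and this is the wrong parity for the usual Roberts-type pair. I would instead reuse the pairs from \cite{chandran2026boxicity}: for $i\notin\{1,j\}$ take $x_i=N/p_i$ and $y_i=N/p_i$-type elements chosen so that $x_iy_i\neq 0$, while each of them multiplies to $0$ against every element of every other pair. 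These pairs give a Roberts' graph $\overline{(a-2)K_2}$ on the indices $i\neq 1,j$, and by Observation~\ref{rob} it contributes $a-2$. What remains is a gadget on the primes $p_1=2$ and $p_j$ that contributes $2$ to the threshold dimension. This gadget must be joined to all of the pair vertices, which means every gadget vertex has to be divisible by $\prod_{i\neq 1,j}p_i^{n_i}$.

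Concretely, write $M=\prod_{i\neq 1,j}p_i^{n_i}$ and consider the elements $M\cdot 2^{\alpha}p_j^{\beta}$ with $\alpha\in\{0,1\}$, $\beta\in\{0,1,2\}$, excluding $0$. Within this set, two elements are adjacent iff their product is $0$ mod $N$, which only depends on the exponents of $2$ and $p_j$. I expect this to be isomorphic to the zero-divisor-type graph of $\Z_2\times\Z_{p_j^2}$ restricted to the exponent patterns, for example using the vertices $Mp_j$, $2Mp_j$, $Mp_j^2$, $2M$. Here $Mp_j\cdot Mp_j^2=0$ if $2$ divides the product, but it does not, so the edges need to be checked; the goal is to extract a gadget containing either $2K_2$ together with $P_4$/$C_4$ obstructions or a $\overline{2K_2}$. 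Using $p_j^2$ is the key point: $p_j$ is adjacent to itself-like elements ($p_j\cdot p_j\cdot(\ldots)$), and this produces an induced $P_4$, e.g. $2M\!-\!Mp_j^2\!-\!2Mp_j\!-\!Mp_j$ (modulo checking). Since a single threshold graph cannot contain an induced $P_4$, the gadget should force two threshold graphs on its own.

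The main obstacle is showing that the gadget and the Roberts' pairs together cannot be covered by $a-1$ threshold graphs. For this I would check the join condition carefully: every gadget vertex must be adjacent to every pair vertex, which requires the pair vertices to carry the full powers $p_1$ and $p_j^{2}$. So I would take $x_i=2p_j^2\prod_{l\neq 1,j,i}p_l^{n_l}\cdot p_i^{n_i-1}$ and $y_i$ similarly, so that $x_iy_i\neq0$. I would then verify $\dimt(\text{gadget})\geq 2$ directly, by showing that its complement edges (the non-edges) cannot be split between one threshold graph's worth of non-edges, i.e. that no single threshold supergraph avoids them all. If the direct join is not available, the fallback is a counting argument: each threshold graph in the representation must include a $P_4$-breaking non-edge from the gadget and cannot simultaneously kill a Roberts non-edge. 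This gives $a-2+2=a$.
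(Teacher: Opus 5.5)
Once the exploratory detours are removed, your argument is the paper's. Your induced $P_4$, $2M - Mp_j^2 - 2Mp_j - Mp_j$, is exactly the paper's set $\{N/p_j^2,\,N/2,\,N/p_j,\,N/(2p_j)\}$; it is joined with $a-2$ non-adjacent pairs, Observations~\ref{sub}, \ref{join} and \ref{rob} give threshold dimension at least $2+(a-2)=a$, and the upper bound comes from Theorem~\ref{tho}. The one detail to pin down is the pair for an index $i$ with $n_i=1$: there $x_i=N/p_i$ coincides with $N/p_i^{n_i}$, so $y_i$ must be a different element of the same annihilator class, for example $2N/p_i^{n_i}$ as in the paper's $S_k=\{N/p_k^{n_k},\,2N/p_k^{n_k}\}$, and the fallback counting argument is not needed.
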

\begin{proof}
    Parts (1) and (3) follow from Theorem~\ref{tho}. We prove part (2). 

    Note that in this case $p_1=2$ and $n_1=1$. Let $j\in [a]$ be the index such that $n_j\geq 2$. If $n_j\geq3$, then part (2) of Theorem~\ref{tho} gives $\dimt(\Gamma(\mathbb{Z}_N))= a$. Therefore, assume that $n_j=2$.
    
    Define $S=\left\{\frac{N}{2p_j},\frac{N}{p_j},\frac{N}{2},\frac{N}{p_j^2}\right\}$ and for $k\in [a]\setminus \{1,j\}$, define $S_k=\left\{\frac{N}{p_k^{n_k}},\frac{2N}{p_k^{n_k}}\right\}$. Observe that $S$ is an induced $P_4$ and $S_k$ is an induced independent set in $\gz$. Also, the join of these graphs exists as an induced subgraph of $\gz$. Therefore, by Observation~\ref{join} we have $$\dimt(\gz)\geq \dimt(S\vee \bigvee_{k\in [a]\setminus \{1,j\}} S_k)\geq 2+a-2=a.$$ The upper bound follows from Theorem~\ref{tho}. 
\end{proof}

\section{Proof of Theorem~\ref{thmmain}}
In this section, we exactly compute the boxicity and threshold dimension of the compressed zero divisor graph of the ring $\Z_N$, where $N$ is a positive natural number. 
\begin{observation}
    Let $N$ be a positive natural number. For distinct $x,y\in V(\gz)$, we have $x\sim y$ if and only if $\gcd(x,N)=\gcd(y,N)$. 
\end{observation}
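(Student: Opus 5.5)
We must show that for distinct $x,y\in V(\gz)$, $Ann(x)=Ann(y)$ in $\Z_N$ holds exactly when $\gcd(x,N)=\gcd(y,N)$. The plan is to compute the annihilator of an element explicitly in terms of $d=\gcd(x,N)$ and then compare. Throughout, elements of $\Z_N$ are represented by integers in $\{0,1,\ldots,N-1\}$; the gcd with $N$ does not depend on the chosen representative.

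\textbf{Step 1: compute the annihilator.} For $x\in\Z_N$ with $d=\gcd(x,N)$, we have $z\in Ann(x)$ iff $N\mid xz$. Writing $x=dx'$ and $N=dN'$ with $\gcd(x',N')=1$, the condition becomes $N'\mid x'z$, that is, $N'\mid z$. Hence
$$Ann(x)=\left\{z\in\Z_N\colon \tfrac{N}{d}\mid z\right\},$$
which is the ideal generated by $N/d$.

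\textbf{Step 2: the forward direction.} If $\gcd(x,N)=\gcd(y,N)$, then Step 1 gives the same set for both annihilators, so $x\sim y$.

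\textbf{Step 3: the converse.} Suppose $Ann(x)=Ann(y)$, and let $d_1=\gcd(x,N)$ and $d_2=\gcd(y,N)$. Then the ideals generated by $N/d_1$ and $N/d_2$ in $\Z_N$ coincide. Each of these is a divisor of $N$, and the ideal generated by a divisor $e$ of $N$ has exactly $N/e$ elements. Comparing cardinalities gives $N/d_1=N/d_2$, hence $d_1=d_2$.

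There is no real obstacle here; the only points needing care are the coprimality cancellation in Step 1 and the cardinality count in Step 3.
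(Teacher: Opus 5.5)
Your proof is correct: identifying $Ann(x)$ with the ideal of $\mathbb{Z}_N$ generated by $N/\gcd(x,N)$, which has exactly $\gcd(x,N)$ elements, settles both directions. The paper states this as an observation without proof, so your argument supplies the routine verification that the paper leaves to the reader.
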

\begin{observation}\label{obser}
   Let $D$ be the set of all divisors of $N$. Then the graph $\gze$ is isomorphic to the induced subgraph of $\gz$ defined on the vertex set $D\setminus \{1,N\}$. (Note that neither $N$ nor $1$ are zero divisors in $\Z_N$ and therefore, they must be excluded.)
\end{observation}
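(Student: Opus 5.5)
The statement to prove is Observation~\ref{obser}. I will construct an explicit bijection between the equivalence classes of $\sim$ and the set $D\setminus\{1,N\}$, and then check that adjacency is preserved.

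\emph{Step 1 (vertices).} By the preceding observation, two vertices $x,y\in V(\gz)$ satisfy $x\sim y$ exactly when $\gcd(x,N)=\gcd(y,N)$. So each equivalence class $\langle x\rangle$ is determined by the divisor $d=\gcd(x,N)$.

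I will show this gives a bijection onto $D\setminus\{1,N\}$.
\begin{itemize}
\item The value $d=1$ does not occur: then $x$ would be a unit, not a zero divisor.
\item The value $d=N$ does not occur, since $x\neq 0$ in $\Z_N$.
\item Every divisor $d$ with $1<d<N$ occurs. The element $d$ itself is a nonzero zero divisor, because $d\cdot (N/d)\equiv 0 \pmod N$ and $N/d\not\equiv 0$. Also $\gcd(d,N)=d$.
\end{itemize}
Hence $\langle x\rangle\mapsto \gcd(x,N)$ is a bijection from $V(\gze)$ onto $D\setminus\{1,N\}$. I choose the representative $d$ for each class.

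\emph{Step 2 (edges).} For distinct divisors $d,d'$ in $D\setminus\{1,N\}$, the pair is adjacent in $\gz$ iff $N\mid dd'$. By Definition~\ref{compgr}, adjacency of two classes in $\gre$ is tested through representatives. This is well defined because $xy\equiv 0\pmod N$ depends only on $\gcd(x,N)$ and $\gcd(y,N)$.

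To see this, write $x=\gcd(x,N)\,u$ with $\gcd(u, N/\gcd(x,N))=1$. It then suffices to show that $N\mid xy$ iff $N\mid \gcd(x,N)\gcd(y,N)$. I will check this prime by prime: for each prime $p_i$, compare $\min(v_{p_i}(x),n_i)$ with $v_{p_i}(x)$, and note that the condition $v_{p_i}(x)+v_{p_i}(y)\ge n_i$ is unchanged when each valuation is truncated at $n_i$.

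So the graph $\gze$ is exactly the graph induced by $\gz$ on the representatives $D\setminus\{1,N\}$, which proves Observation~\ref{obser}.

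The only mildly delicate point is this well-definedness check in Step 2; the prime-valuation argument settles it, and everything else is routine.
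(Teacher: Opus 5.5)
Your proof is correct and follows the reasoning the paper leaves implicit: the preceding observation identifies each class $\langle x\rangle$ with the divisor $\gcd(x,N)\in D\setminus\{1,N\}$, divisors serve as class representatives, and adjacency is then read off from Definition~\ref{compgr}. The step you flag as delicate in Step 2 follows directly from the definition of $\sim$, so the prime-valuation check is unnecessary: if $Ann(x)=Ann(x')$ then $xy=0\iff y\in Ann(x)=Ann(x')\iff x'y=0$.
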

Due to Observation~\ref{obser}, we will simply consider a vertex of $\gze$ as an integer in $D\setminus\{1,N\}$. For two adjacent vertices $x,y\in V(\gze)$, we will sometimes abuse the notation and say $xy\in E(\gze)$ instead of saying $\{x,y\}\in E(\gze)$ whenever it is clear from the context what the intended meaning is. 
\begin{lemma}\label{cb}
Let $N=\prod_{i\in [a]} p_i^{n_i}$ be the prime factorization of $N$ with the primes $p_1<p_2< \cdots < p_a$ where $a\geq 2$ and for each $i\in [a]$, $n_i\geq 2$. Then $box(\gze)=a$.        
    
\end{lemma}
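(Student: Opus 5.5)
Both bounds follow almost for free: the upper bound from Theorem~\ref{thmgz} via Observation~\ref{sub}, and the lower bound from an induced join of $a$ Roberts-type blocks.

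\emph{Upper bound.} By Observation~\ref{obser}, $\gze$ is (isomorphic to) an induced subgraph of $\gz$. By Observation~\ref{sub}, $box(\gze)\leq box(\gz)$. Theorem~\ref{thmgz} gives $box(\gz)\leq a$ in all cases. Hence $box(\gze)\leq a$.

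\emph{Lower bound.} I would realise a join of $a$ graphs, each of boxicity $1$, as an induced subgraph of $\gze$, and then apply Observations~\ref{join} and~\ref{sub}. For each $i\in[a]$ write $M_i=N/p_i^{n_i}$, and take
\[
A_i=\left\{\frac{N}{p_i^{n_i-1}},\ \frac{N}{p_i}\right\}=\{p_iM_i,\ p_i^{n_i-1}M_i\}.
\]
These two elements are distinct since $n_i\geq 2$. Each is a proper divisor of $N$ other than $1$, so it is a vertex of $\gze$.

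\emph{Adjacency checks.} Every element of $A_i$ is divisible by $p_k^{n_k}$ for all $k\neq i$ and by $p_i$ at least once. Hence for $i\neq k$, the product of an element of $A_i$ with an element of $A_k$ contains $p_l^{n_l}$ for every $l\notin\{i,k\}$. It also contains $p_i^{n_i}$, coming from the element of $A_k$, and $p_k^{n_k}$, coming from the element of $A_i$. So the product is $\equiv 0 \pmod N$, and all edges between different $A_i$ are present.

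Inside $A_i$, the product $p_iM_i\cdot p_i^{n_i-1}M_i$ is divisible by $p_i^{n_i}$ and by every $p_k^{n_k}$ with $k\neq i$, so it is $0$ and the two vertices are adjacent. That gives a $K_2$, which is not enough. I therefore need a non-interval piece (boxicity $\geq 1$) for each $i$.

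\emph{Replacing $K_2$ by $2K_1$.} I would use $\{M_i,\ p_iM_i\}$ instead. The product $M_i\cdot p_iM_i$ has $p_i$-exponent $1<n_i$, so the pair is independent, i.e.\ $2K_1=\overline{K_2}$, which has boxicity $1$. Now recheck the cross edges for $i\neq k$. The element $M_i$ contains $p_k^{n_k}$ but no $p_i$. Its partner from $A_k$ (either $M_k$ or $p_kM_k$) contains $p_i^{n_i}$. So the product is again divisible by $N$, and all cross edges survive.

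\emph{Conclusion.} The sets $B_i=\{M_i,p_iM_i\}$ are pairwise disjoint, because their $p$-adic profiles differ. Hence $\bigcup_i B_i$ induces the Roberts' graph $\overline{aK_2}$, the join of $a$ copies of $2K_1$. By Observation~\ref{rob} (equivalently, Observation~\ref{join}), its boxicity is $a$, so $box(\gze)\geq a$.

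The only delicate point is choosing each pair to be non-adjacent while keeping every cross pair adjacent. The choice $\{M_i,p_iM_i\}$ does both, and $p_iM_i\neq N$ precisely because $n_i\geq 2$. This is where the hypothesis $n_i\geq 2$ for all $i$ is used.
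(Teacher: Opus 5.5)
Your proof is correct and is essentially the paper's proof. The paper gets the upper bound $box(\Gamma_E(\mathbb{Z}_N))\le box(\Gamma(\mathbb{Z}_N))\le a$ from Theorem~\ref{thmgz}, and for the lower bound it uses exactly your sets $\{N/p_i^{n_i},\,N/p_i^{n_i-1}\}$, whose union induces the Roberts' graph $\overline{aK_2}$. (One small slip in your discarded first attempt: when $n_i=2$ the two elements $N/p_i^{n_i-1}$ and $N/p_i$ of $A_i$ coincide, so they are distinct only for $n_i\ge 3$; this does not affect the final argument.)
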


\begin{proof} Since $\gze$ is an induced subgraph of $\gz$, it is clear from Observation~\ref{sub} and Theorem~\ref{thmgz} that $box(\gze)\leq box(\gz)\leq a$. 
 
 We will show the lower bound. 
 For $i\in [a]$, take $S_i=\left\{\frac{N}{p_i^{n_i-1}},\frac{N}{p_i^{n_i}}\right\}$. Note that $\forall i\in [a]$, $\frac{N}{p_i^{n_i-1}}\times \frac{N}{p_i^{n_i}}$ does not contain $p_i^2$. Therefore, $\forall i\in [a], S_i$ is an independent set. It is also clear that for distinct $i,j\in [a]$, $S_i\cup S_j$ induces a $C_4$ in $\gze$. Therefore, the graph induced by $\bigcup_{i\in [a]}S_i$ is the Roberts' graph $\overline{aK_2}$ and hence by Observations~\ref{sub} and~\ref{rob} we have $box(\gze)\geq a$.    

\end{proof}
\begin{lemma}\label{gzep}
Let $N=\prod_{i\in [a]} p_i$ be the prime factorization of $N$ with the primes $p_1<p_2< \cdots < p_a$ where $a\geq 1$. If $a\leq 2$ then $box(\gze)=0$. If $a\geq 3$ then $box(\gze)=a-1$.
\end{lemma}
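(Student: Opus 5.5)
The proof has two parts: the small cases together with the upper bound, which reduce to earlier results, and the lower bound $a-1$ for $a\geq 3$, which needs a new argument. By Observation~\ref{obser} and the discussion in the introduction, $\gze$ is isomorphic to the disjointness graph of $\mathcal{P}([a])\setminus\{\emptyset,[a]\}$ via $x\mapsto\{i\colon p_i\nmid x\}$, and I will work in that language throughout.

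\textbf{Small cases.} For $a=1$ the vertex set is empty. For $a=2$ the only vertices are $\{1\}$ and $\{2\}$, which are disjoint and hence adjacent, so the graph is a clique and its boxicity is $0$.

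\textbf{Upper bound.} The graph $\gze$ depends only on $a$, because the divisor lattice of a square-free $N$ depends only on the number of prime factors. So I may replace $N$ by $N'=2\cdot\prod_{i=2}^{a}p_i'$ with distinct odd primes $p_i'$. Then $\Gamma_E(\mathbb{Z}_{N'})\cong\gze$, and $\Gamma_E(\mathbb{Z}_{N'})$ is an induced subgraph of $\Gamma(\mathbb{Z}_{N'})$. By Theorem~\ref{thmgz}, with $p_1=2$, $n_1=1$ and all other $n_i=1$, we get $box(\Gamma(\mathbb{Z}_{N'}))=a-1$. Observation~\ref{sub} then gives $box(\gze)\leq a-1$.

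\textbf{Lower bound: reformulation.} Following Corollary~\ref{pow}, I restrict to the induced subgraph $H$ on singletons and $2$-sets. Suppose, for contradiction, that $E(H)=E(K_H)\cap E(I_1)\cap\cdots\cap E(I_t)$ with $t\leq a-2$ interval graphs, as in Theorem~\ref{Roberts}. Every non-edge of $H$ must then be a non-edge of some $I_k$. I focus on the non-edges $\{i\}$--$\{i,j\}$ and index them by the ordered pairs $(i,j)$ with $i\neq j$; there are $a(a-1)$ of them.

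\textbf{Lower bound: the $C_4$ obstruction.} Suppose a single $I_k$ misses both $(i,j)$ and $(l,m)$, where $\{i,j\}\cap\{l,m\}=\emptyset$. The four cross pairs among $\{i\},\{i,j\},\{l\},\{l,m\}$ are disjoint sets, so they are edges of $H$ and hence edges of $I_k$. The two missed pairs are non-edges of $I_k$. Hence these four vertices induce a $C_4$ in $I_k$, which is impossible in an interval graph. Consequently, the underlying $2$-sets of the ordered pairs assigned to each $I_k$ form an intersecting family, so they form a star or a triangle.

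\textbf{Main obstacle.} Counting alone is too weak. A star centred at $c$ can carry up to $2(a-1)$ ordered pairs, namely $(c,j)$ and $(j,c)$ for all $j$, and that only forces $t\geq a/2$. The key step, which I expect to be the hardest, is to show that an interval graph cannot absorb both orientations in bulk. The plan is to take a star at centre $c$ that contains $(j,c)$ for two or more leaves $j$, together with some $(c,j')$, and combine these with other vertices such as $\{j\}$, $\{j'\}$, $\{c\}$ and other $2$-sets disjoint from them. From these I aim to exhibit either another induced $C_4$ or the six-vertex configuration of Corollary~\ref{6graph}, namely a $6$-cycle with the three long diagonals missing. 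The $6$-cycle would alternate singletons and disjoint $2$-sets, with the diagonals being intersecting pairs.

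\textbf{Finishing the count.} Once each $I_k$ is shown to cover essentially only the pairs $(c_k,\cdot)$ for a single centre $c_k$, plus $O(1)$ exceptional pairs, I charge the ordered pairs accordingly. Covering all pairs $(i,\cdot)$ for every $i$ then needs at least $a-1$ graphs: at most one index $i$ can have its pairs $(i,j)$ covered "from the other side" as $(j,i)$ inside stars centred at the other indices. If this charging turns out to be too lossy, my fallback is to add the non-edges $\{i,j\}$--$\{i,k\}$ to the accounting and use asteroidal triples of singletons to force the same conclusion. This yields $t\geq a-1$, contradicting $t\leq a-2$, so $box(\gze)=a-1$ for $a\geq 3$.
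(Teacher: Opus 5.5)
Your small cases, upper bound, restriction to $H$ and $C_4$ obstruction are all correct; the $C_4$ step is exactly the paper's Case~1. The lower bound, however, has a genuine gap. The step that limits how many ordered pairs a single $I_k$ can absorb is never proved, and the configuration you plan to attack cannot yield an obstruction.

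To see this, take a star at $c$ with two leaf pairs $(j_1,c),(j_2,c)$ and all centre pairs $(c,j)$. Assign $\{c\}\mapsto[0,1]$, $\{j_1\}\mapsto[0.5,5]$, $\{j_2\}\mapsto[-1,3.5]$, $\{c,j_1\}\mapsto[-1,-0.5]$, $\{c,j_2\}\mapsto[4.5,5]$, $\{c,j\}\mapsto[3,4]$ for $j\notin\{c,j_1,j_2\}$, and $[-1,5]$ to every other vertex. This is an interval supergraph of $H$ that misses all $a+1$ of these pairs.

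What is true is that three leaf pairs alone are fatal. If $I_k$ misses $(k_1,c),(k_2,c),(k_3,c)$, then $\{k_1\},\{c,k_2\},\{k_3\},\{c,k_1\},\{k_2\},\{c,k_3\}$ are consecutively disjoint, so they form a $6$-cycle in $I_k$. Its three long diagonals $\{k_i\}$--$\{c,k_i\}$ are missing, which contradicts Corollary~\ref{6graph}.

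You also never treat the triangle case. If $I_k$ misses all six pairs on a triangle $\{k_1,k_2,k_3\}$, the six vertices induce one of the four graphs of Figure~\ref{fig}. That is either the net, which has an asteroidal triple, or a graph containing an induced $C_4$. Hence a triangle carries at most five pairs.

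With these facts the count closes, but only because the constants are exact. For $a\geq 4$ each $I_k$ absorbs at most $\max(a+1,5)=a+1$ pairs, and $(a-2)(a+1)<a(a-1)$. With three exceptional pairs per graph, however, $(a-2)(a+2)\geq a(a-1)$ for $a\geq 4$. So ``$O(1)$ exceptional pairs'' together with your heuristic charging (``at most one index is covered from the other side'') does not suffice. The case $a=3$ is easiest done directly, since then $H$ is the net.

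The paper packages the same ingredients as a pigeonhole argument:
\begin{itemize}
\item Some $I_l$ misses at least $\lceil a(a-1)/(a-2)\rceil\geq a+2$ pairs.
\item If at most two $2$-sets occur there in both orientations, at least $a$ distinct $2$-sets occur. Erd\H{o}s--Ko--Rado then gives two disjoint ones, hence a $C_4$.
\item Otherwise three pairwise intersecting doubly oriented $2$-sets form either a star, excluded by Corollary~\ref{6graph} via the leaf orientations, or a triangle, excluded by Figure~\ref{fig}.
\end{itemize}
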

\begin{proof}
    When $a=1$, $\gze$ is the empty graph and when $a=2$, $\gze$ is a single edge. Therefore, $box(\gze)=0$.
    
    Now assume $a\geq 3$.
     We will first show the upper bound. Suppose that $2$ is a divisor of $N$. Since $\gze$ is an induced subgraph of $\gz$, by Theorem~\ref{thmgz} we have $box(\gze)\leq box(\gz)\leq a-1$ and we are done. Therefore, assume that $2$ is not a divisor of $N$. Let $j\in [a]$ be arbitrarily picked. Define $N'=2\times \prod_{i\in [a]\setminus \{j\}}p_i$. We will show that $\gze\cong\gzi{N'}{}$. The upper bound then follows by Theorem~\ref{thmgz} as $box(\gze)=box(\gzi{N'}{})\leq box(\Gamma( \Z_{N'}))\leq a-1$. By Observation~\ref{obser}, we know that any vertex of $\gze$ is of the form $\prod_{i\in [a]}p_i^{\alpha_i}$ where $\forall i\in [a], \alpha_i\in \{0,1\}$. Define the isomorphism $\phi\colon \gze\rightarrow \gzi{N'}{}$ as $\phi(\prod_{i\in [a]}p_i^{\alpha_i})=2^{\alpha_j}\times \prod_{i\in [a]\setminus \{j\}}p_i^{\alpha_i}$. It is easy to verify that $\phi$ is an isomorphism.
    
    Now we show the lower bound. When $N$ is the product of three primes, $\gze$ is the graph shown in Figure~\ref{figa} (take $p_{k_1}, p_{k_2}$ and $p_{k_3}$ to be the three prime divisors of $N$). This graph is not an interval graph because it contains an asteroidal triple. Therefore, $box(\gze)\geq 2=a-1$. So the lower bound holds when $a=3$.  
    
\begin{figure}[t!]
        \centering
        \begin{subfigure}[t]{0.4\textwidth}
        \resizebox{1\textwidth}{!}{
        \begin{tikzpicture}
\renewcommand{\defradius}{0.15}
\renewcommand{\vertexset}{(x,0,1.5,,,,,$\frac{N}{p_{k_1}}$,0.1cm,-90),(y,-1.35,-0.75,,,,,$\frac{N}{p_{k_2}}$,0.1cm,5),(z,1.35,-0.75,,,,,$\frac{N}{p_{k_3}}$,0.1cm,175),(a,-2.2242,-1.2375,,,,,$\frac{N}{p_{k_3}p_{k_1}}$,0.1cm,180),(b,2.2242,-1.2375,,,,,$\frac{N}{p_{k_2}p_{k_1}}$,0.1cm,0),(c,0,2.5,,,,,$\frac{N}{p_{k_3}p_{k_2}}$,0.1cm,0)}
\renewcommand{\edgeset}{(x,y),(x,z),(y,z),(a,y),(b,z),(x,c)}
\drawgraph
\end{tikzpicture} }
\subcaption{}
\label{figa}
\end{subfigure}
~
\hspace{20pt}
 \begin{subfigure}[t]{0.4\textwidth}
\resizebox{1\textwidth}{!}{
        \begin{tikzpicture}
        
\renewcommand{\defradius}{0.15}
\renewcommand{\vertexset}{(x,0,1.5,,,,,$\frac{N}{p_{k_1}}$,0.1cm,-90),(y,-1.35,-0.75,,,,,$\frac{N}{p_{k_2}}$,0.1cm,5),(z,1.35,-0.75,,,,,$\frac{N}{p_{k_3}}$,0.1cm,175),(a,-2.2242,-1.2375,,,,,$\frac{N}{p_{k_3}p_{k_1}}$,0.1cm,180),(b,2.2242,-1.2375,,,,,$\frac{N}{p_{k_2}p_{k_1}}$,0.1cm,0),(c,0,2.5,,,,,$\frac{N}{p_{k_3}p_{k_2}}$,0.1cm,0)}
\renewcommand{\edgeset}{(x,y),(x,z),(y,z),(a,y),(b,z),(x,c),(a,b)}
\drawgraph

\end{tikzpicture} }
\subcaption{}
\label{figb}
\end{subfigure}
~\\
\vspace{20pt}
 \begin{subfigure}[t]{0.4\textwidth}
\resizebox{1\textwidth}{!}{
        \begin{tikzpicture}
\renewcommand{\defradius}{0.15}
\renewcommand{\vertexset}{(x,0,1.5,,,,,$\frac{N}{p_{k_1}}$,0.1cm,-90),(y,-1.35,-0.75,,,,,$\frac{N}{p_{k_2}}$,0.1cm,5),(z,1.35,-0.75,,,,,$\frac{N}{p_{k_3}}$,0.1cm,175),(a,-2.2242,-1.2375,,,,,$\frac{N}{p_{k_3}p_{k_1}}$,0.1cm,180),(b,2.2242,-1.2375,,,,,$\frac{N}{p_{k_2}p_{k_1}}$,0.1cm,0),(c,0,2.5,,,,,$\frac{N}{p_{k_3}p_{k_2}}$,0.1cm,0)}
\renewcommand{\edgeset}{(x,y),(x,z),(y,z),(a,y),(b,z),(x,c),(a,b),(c,b)}
\drawgraph
\end{tikzpicture} }
\subcaption{}
\label{figc}
\end{subfigure}
~
\hspace{20pt}
 \begin{subfigure}[t]{0.4\textwidth}
\resizebox{1\textwidth}{!}{
        \begin{tikzpicture}
\renewcommand{\defradius}{0.15}
\renewcommand{\vertexset}{(x,0,1.5,,,,,$\frac{N}{p_{k_1}}$,0.1cm,-90),(y,-1.35,-0.75,,,,,$\frac{N}{p_{k_2}}$,0.1cm,5),(z,1.35,-0.75,,,,,$\frac{N}{p_{k_3}}$,0.1cm,175),(a,-2.2242,-1.2375,,,,,$\frac{N}{p_{k_3}p_{k_1}}$,0.1cm,180),(b,2.2242,-1.2375,,,,,$\frac{N}{p_{k_2}p_{k_1}}$,0.1cm,0),(c,0,2.5,,,,,$\frac{N}{p_{k_3}p_{k_2}}$,0.1cm,0)}
\renewcommand{\edgeset}{(x,y),(x,z),(y,z),(a,y),(b,z),(x,c),(a,b),(c,b),(a,c)}
\drawgraph
\end{tikzpicture} }
\subcaption{}
\label{figd}
\end{subfigure}
\caption{}
\label{fig}
    \end{figure}
We now assume that $a\geq 4$. 
For the sake of contradiction, suppose that the boxicity of $\gze\leq a-2$. Let $I_1, I_2, \ldots ,I_{a-2}$ be a box representation of $\gze$. Corresponding to every ordered pair $(j,k)\in [a]\times [a]$ with $j\neq k$, define the set $T(j,k)=\{N/p_jp_k$, $N/p_k\}$. Note that for distinct $j,k\in[a]$, the sets $T(j,k)$ and $T(k,j)$ represent two distinct pairs. Since $N$ is a square free integer, for distinct $j,k\in [a]$, $T(j,k)$ forms an independent set in $\gze$. For distinct $j,k\in [a]$, $T(j,k)$ represents an absent edge in $\gze$ and, therefore, it has to be absent in at least one of the interval graphs $I_1,I_2, \ldots ,I_{a-2}$. 
 Since there are $a(a-1)$ many ordered pairs of the form $(j,k)\in [a]\times [a]$ with $j\neq k$, $\exists l\in [a-2]$ such that $I_l$ simultaneously contains at least $\left\lceil\frac{a(a-1)}{a-2}\right\rceil=\left\lceil a\left(1+\frac{1}{a-2}\right)\right\rceil\geq  a+2$ many sets of the form $T(j,k)$ as absent edges. 

 Define \begin{align*}
    &S: =\{\{j,k\}\colon T(j,k) \t{ or } T(k,j) \t{ is an absent edge in }I_l\},\\&
S':=\{(j,k)\colon T(j,k)\t{ is an absent edge in }I_l \}.
\end{align*}
For a distinct pair $j,k\in [a]$, we will say that the set $\{j,k\}$ \emph{appears as a twin} in $I_l$, if both $T(j,k)$ and $T(k,j)$ appear in $I_l$ as absent edges. Let $W\subseteq S$ be the collection of all sets in $S$ that appear as twins in $I_l$. We have two cases:
\begin{enumerate}
    \item[Case 1:] Consider the case where $|W|\leq 2$. We prove the case when $|W|=2$ because the proof of the case $|W|\leq 1$ is similar. Let $W=\{\{j,k\},\{j',k'\}\}$ (here we allow for the possibility that $j=j'$ or $k=k'$). Assume without loss of generality that $j<k$ and $j'<k'$. Define $S''=S'\setminus \{(k,j),(k',j')\}$. Observe that because $|W|=2$, we have $|S|=|S''|\geq a$. By the Erd\H{o}s-Ko-Rado theorem, the largest intersecting family of 2-subsets of $[a]$ has at most $a-1$ elements. Therefore, there is a pair of disjoint sets $\{j_1,k_1\}$ and $\{j_2,k_2\}$ in $S$. Without loss of generality, assume that $T(j_1,k_1)$ and $T(j_2,k_2)$ are present as absent edges in $I_l$. Then $T(j_1,k_1)\cup T(j_2,k_2)$ forms an induced $C_4$ in $I_l$. This is a contradiction to the fact that $I_l$ is an interval graph. 
    \item[Case 2:] Consider the case where $|W|\geq 3$. Suppose $A_1,A_2,A_3\in W$ are distinct subsets. If there are distinct indices $i,i'\in \{1,2,3\}$ satisfying $A_i\cap A_{i'}=\emptyset$, then we have an induced $C_4$ as seen in Case 1. Therefore, assume that $A_1,A_2$ and $A_3$ are pairwise intersecting. There are exactly two patterns in which $A_1,A_2$ and $A_3$ can be pairwise intersecting.
    \begin{enumerate}
        \item[Case 2(a):] The first case occurs when there exist distinct $j,k_1,k_2,k_3\in [a]$ such that $A_1=\{j,k_1\}$, $A_2=\{j,k_2\}$ and $A_3=\{j,k_3\}$. In this case $T(j,k_1)\cup T(j,k_2)\cup T(j,k_3)$ induces a graph satisfying Corollary~\ref{6graph} in $I_l$. This is a contradiction.

        \item[Case 2(b):] The second case occurs when there exist distinct $k_1,k_2,k_3\in [a]$ such that $A_1=\{k_1,k_2\},$ $ A_2=\{k_2,k_3\}$ and $A_3=\{k_3,k_1\}$. In this case, the graph induced by $T(k_1,k_2)\cup T(k_2,k_3)\cup T(k_3,k_1)$ is one of the four graphs described in Figure~\ref{fig}. This is because for every distinct pairs $i,i'\in \{1,2,3\}$, both $T(k_i,k_{i'})$ and $T(k_{i'},k_i)$ are present as absent edges in $I_l$.  Note that the graph in Figure~\ref{figa} contains an asteroidal triple and the graphs in Figures~\ref{figb}, \ref{figc} and \ref{figd} all contain an induced $C_4$. Therefore, none of the graphs described in Figure~\ref{fig} are interval graphs. This contradicts the fact that $I_l$ is an interval graph.
    \end{enumerate}
\end{enumerate}
\end{proof}
\begin{lemma}\label{rest}
      Let $N=\prod_{i\in [a]} p_i^{n_i}$ be the prime factorization of $N$ with the primes $p_1<p_2< \cdots < p_a$ where $a\geq 1$ and for each $i\in [a]$, $n_i\geq 1$. Let $P=\{i\in [a]\colon n_i=1\}$.
      
           If $\emptyset \subsetneq P\subsetneq [a]$ and $\exists j\in [a]$ such that $n_j\geq 4$, then $box(\gze)= a$.

\end{lemma}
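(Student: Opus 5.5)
The upper bound is immediate. $\gze$ is an induced subgraph of $\gz$, so Observation~\ref{sub} and Theorem~\ref{thmgz} give $box(\gze)\leq box(\gz)\leq a$. All the work is in the lower bound $box(\gze)\geq a$. As in Lemma~\ref{cb}, I will find an induced copy of the Roberts' graph $\overline{aK_2}$ and then apply Observations~\ref{sub} and~\ref{rob}.

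The obstacle to copying Lemma~\ref{cb} verbatim is the primes in $P$. For $k\in P$, two vertices that are non-adjacent only because of coordinate $k$ must both have $p_k$-exponent $0$. If they are full in every other coordinate, they coincide with the single vertex $N/p_k$. The idea is to use the coordinate $j$ with $n_j\geq 4$ to separate such twins. Recall that for divisors $x,y$ of $N$, $xy\equiv 0\pmod N$ if and only if $e_i(x)+e_i(y)\geq n_i$ for every $i$, where $e_i$ denotes the $p_i$-exponent.

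\textbf{Construction.} For every $k\in[a]\setminus\{j\}$ set
\[A_k=\left\{\frac{N}{p_k^{n_k}},\ \frac{N}{p_k^{n_k}p_j}\right\}.\]
For the index $j$ itself set
\[A_j=\left\{p_j\cdot\frac{N}{p_j^{n_j}},\ p_j^{n_j-2}\cdot\frac{N}{p_j^{n_j}}\right\}.\]
That is, $A_j$ consists of the two vertices that are full in every coordinate other than $j$ and have $p_j$-exponents $1$ and $n_j-2$. These two vertices are distinct and lie in $D\setminus\{1,N\}$ precisely because $n_j\geq 4$, which gives $1<n_j-2<n_j$.

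\textbf{Verification.} I check the adjacency conditions coordinate by coordinate.
\begin{itemize}
\item Each $A_k$ is independent. For $k\neq j$, the coordinate-$k$ exponents sum to $0<n_k$. For $A_j$, the coordinate-$j$ exponents sum to $n_j-1<n_j$.
\item Vertices in $A_k$ and $A_{k'}$ with $k,k'\neq j$ distinct are adjacent. Coordinate $k$ receives $n_k$ from the $A_{k'}$ vertex, and symmetrically for $k'$. Coordinate $j$ has exponents at least $n_j-1$ on both sides, and $2(n_j-1)\geq n_j$.
\item Vertices in $A_k$ ($k\neq j$) and $A_j$ are adjacent. Coordinate $k$ is full on the $A_j$ side. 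Coordinate $j$ gets at least $(n_j-1)+1=n_j$, since every vertex of $A_j$ has $p_j$-exponent at least $1$. All other coordinates are full on both sides.
\end{itemize}
Hence $\bigcup_{k\in[a]}A_k$ induces $\overline{aK_2}$, so $box(\gze)\geq a$.

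The main point to get right is the choice of exponents in $A_j$, which must satisfy three constraints at once:
\begin{itemize}
\item the two exponents sum to less than $n_j$, so that $A_j$ is independent;
\item each exponent is at least $1$, so that both vertices are adjacent to the vertices $N/(p_k^{n_k}p_j)$;
\item the two exponents are distinct, so that $A_j$ really has two vertices.
\end{itemize}
The pair $\{1,n_j-2\}$ satisfies all three exactly when $n_j\geq 4$, which explains the hypothesis. Note that the hypothesis $P\neq\emptyset$ is not actually used in this lower-bound argument. It only places the lemma within the case split of the paper.
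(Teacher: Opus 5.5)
Your proof is correct and takes the paper's approach: the upper bound comes from Theorem~\ref{thmgz}, and the lower bound exhibits an induced Roberts' graph $\overline{aK_2}$ in which the high power of $p_j$ splits the twin pair for each $k\in P$, exactly as in the paper's sets $S_k'$. The only differences are minor, and both of your choices check out. You also use $p_j$ to split the pairs for $k\notin P\cup\{j\}$, where the paper uses $\{N/p_i^{n_i},N/p_i^{n_i-1}\}$, and you take $p_j$-exponents $\{1,n_j-2\}$ instead of the paper's $\{1,2\}$.
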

\begin{proof}
From Theorem~\ref{thmgz}, we know that $box(\gze)\leq box(\gz)\leq a$. Therefore, we will prove the lower bound.
   
 Define $S'_j=\left\{\frac{N}{p_j^{n_j-1}},\frac{N}{p_j^{n_j-2}}\right\}$. For $i\in [a]\setminus (P\cup \{j\})$, define $S_i=\left\{\frac{N}{p_i^{n_i}},\frac{N}{p_i^{n_i-1}}\right\}$ and for $k\in P$, define $S_k'=\left\{\frac{N}{p_k},\frac{N}{p_jp_k}\right\}$. It is easy to check that $S_j'$ is an independent set as $\frac{N}{p_j^{n_j-1}}\times \frac{N}{p_j^{n_j-2}}$ does not contain $p^{4}_j$. Similarly, for $k\in P$, the sets $S_k'$ are all independent sets as $\frac{N}{p_k}\times \frac{N}{p_jp_k}$ does not contain $p_k$. Moreover, it is easy to verify that for $k\in P\cup \{j\}$ and for $ i\in [a]\setminus (P\cup \{j\})$, the graph induced by $S_k'\cup S_i$ is the graph $\overline{2K_2}$. Similarly, for distinct $k,k'\in P\cup \{j\}$, the graph induced by $S'_k\cup S'_{k'}$ is  the graph $\overline{2K_2}$. It follows that we have the Roberts' graph $\overline{aK_2}$ as an induced subgraph, given by $$\bigcup_{k\in P\cup \{j\}}S_k'\cup \bigcup_{i\in [a]\setminus (P\cup \{j\})} S_i.$$ 
 Therefore, by Observations~\ref{sub} and~\ref{rob}, $box(\gze)\geq a$.
\end{proof}

\begin{lemma}\label{2bii}
     Let $N=\prod_{i\in [a]} p_i^{n_i}$ be the prime factorization of $N$ with the primes $p_1<p_2< \cdots < p_a$ where $a\geq 1$ and $\forall i\in [a],$ $ n_i\in [3]$. Let $P=\{i\in [a]\colon n_i=1\}$. 
    Suppose $N$ satisfies the following, 
     \begin{enumerate}[label=(\roman*)]
     \item $P\neq \emptyset$,
         \item $\exists j,j'\in [a]$ such that $j\neq j'$ and $n_j=n_{j'}= 3$.
     \end{enumerate}
      Then $box(\gze)= a$.
\end{lemma}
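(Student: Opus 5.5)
The upper bound is immediate: $\gze$ is an induced subgraph of $\gz$, so Observation~\ref{sub} and Theorem~\ref{thmgz} give $box(\gze)\leq box(\gz)\leq a$. For the lower bound I will follow the strategy of Lemmas~\ref{cb} and~\ref{rest}: exhibit an induced Roberts' graph $\overline{aK_2}$ and then apply Observations~\ref{sub} and~\ref{rob}. Throughout I identify a vertex with a divisor $d$ of $N$ (Observation~\ref{obser}) and write it by its exponent vector. Two vertices $x,y$ are adjacent if and only if, for every $i\in[a]$, the exponents of $p_i$ in $x$ and $y$ sum to at least $n_i$.

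Fix the two indices $j\neq j'$ with $n_j=n_{j'}=3$. The construction makes the two cubes play the role that the single prime power with $n_j\geq 4$ played in Lemma~\ref{rest}: the prime $p_j$ absorbs the "slack" used by the pairs indexed by $P$, and $p_{j'}$ is used to separate the two vertices of the pair indexed by $j$. The pairs are:
\begin{itemize}
\item for $i\in[a]\setminus(P\cup\{j,j'\})$ (so $n_i\in\{2,3\}$): $S_i=\left\{\frac{N}{p_i^{n_i}},\frac{N}{p_i^{n_i-1}}\right\}$;
\item for $k\in P$: $S_k'=\left\{\frac{N}{p_k},\frac{N}{p_jp_k}\right\}$;
\item for $j$: $S_j'=\left\{\frac{N}{p_j^2},\frac{N}{p_j^2p_{j'}}\right\}$;
\item for $j'$: $S_{j'}'=\left\{\frac{N}{p_{j'}^2},\frac{N}{p_{j'}^2p_j}\right\}$.
\end{itemize}
All these divisors are distinct and lie in $D\setminus\{1,N\}$, and there are exactly $a$ pairs, one for each index.

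The verification is a coordinatewise exponent check; this bookkeeping is the only real obstacle, and the choice above is designed to pass it.
\begin{itemize}
\item \emph{Each pair is an independent set.} In $S_i$ the $p_i$-exponents are $0$ and $1$, which sum to less than $n_i$. In $S_k'$ both $p_k$-exponents are $0$. In $S_j'$ the $p_j$-exponents are $1,1$, summing to $2<3$. In $S_{j'}'$ the $p_{j'}$-exponents are $1,1$, again summing to $2<3$.
\item \emph{Two different pairs are completely joined, giving an induced $\overline{2K_2}$.} Any vertex outside $S_i$ has full $p_i$-exponent, so $S_i$ is joined to everything else. For $S_k'$ and $S_{k'}'$ with $k,k'\in P$: the $p_k$-exponents are $0$ and $1$, and the $p_j$-exponents are at least $2+2\geq 3$. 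For $S_k'$ and $S_j'$: the $p_j$-exponents are at least $2+1=3$, and the $p_{j'}$-exponent in $S_k'$ is $3$. For $S_k'$ and $S_{j'}'$: the $p_{j'}$-exponents are $3+1$, and the $p_j$-exponents are at least $2+2$. For $S_j'$ and $S_{j'}'$: the $p_j$-exponents are $1$ (in $S_j'$) against $3$ or $2$ (in $S_{j'}'$), and the $p_{j'}$-exponents are $3$ or $2$ (in $S_j'$) against $1$ (in $S_{j'}'$); each sum is at least $3$.
\end{itemize}
Every coordinate not explicitly mentioned above is full in at least one of the two vertices, so it imposes no constraint.

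Hence the union of all $a$ pairs induces $\overline{aK_2}$ in $\gze$. By Observations~\ref{sub} and~\ref{rob}, $box(\gze)\geq a$, which together with the upper bound gives $box(\gze)=a$.
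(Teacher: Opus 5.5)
Your proposal is correct and is essentially the paper's proof. The upper bound comes from Theorem~\ref{thmgz} via Observation~\ref{sub}, and the lower bound uses the same induced Roberts' graph $\overline{aK_2}$, built from the pairs $\left\{\frac{N}{p_k},\frac{N}{p_jp_k}\right\}$ for $k\in P$, $\left\{\frac{N}{p_j^2},\frac{N}{p_j^2p_{j'}}\right\}$, $\left\{\frac{N}{p_{j'}^2},\frac{N}{p_{j'}^2p_j}\right\}$, and $\left\{\frac{N}{p_i^{n_i}},\frac{N}{p_i^{n_i-1}}\right\}$ for the remaining indices, with your coordinatewise exponent check (which the paper leaves to the reader) carried out accurately.
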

\begin{proof}
From Theorem~\ref{thmgz}, we know that $box(\gze)\leq box(\gz)\leq a$. Therefore, we will prove the lower bound.

 Suppose $\exists j,j'\in [a]$ such that $j\neq j'$ and $n_j=n_{j'}=3$. For $k\in P$, let $S'_k= \left\{\frac{N}{p_k},\frac{N}{p_jp_k}\right\}$. Define $S_j''=\left\{\frac{N}{p_j^2},\frac{N}{p_j^2p_{j'}}\right\}$ and $S_{j'}''=\left\{\frac{N}{p_{j'}^2},\frac{N}{p_{j'}^2 p_j}\right\}$. For $i\in [a]\setminus (P\cup\{j,j'\})$, define $S_i=\left\{\frac{N}{p_i^{n_i}},\frac{N}{p_i^{n_i-1}}\right\}$. Then the graph induced by $$S_j''\cup S_{j'}''\cup \bigcup_{k\in P}S_k'\cup \bigcup_{i\in [a]\setminus (P\cup \{j,j'\})} S_i$$ forms the Roberts' graph $\overline{aK_2}$ as an induced subgraph. Therefore, by Observations~\ref{sub} and~\ref{rob}, $box(\gze)\geq a$.
    \end{proof}
    \begin{definition}
    Suppose $G$ is a graph. We define the \emph{$C_4$ conflict graph} of $G$ (denoted as $\phi(G)$) as follows. The vertices of $\phi(G)$ are all the pairs of vertices $\{v_i,v_j\}\subset V(G)$ such that $v_i\neq v_j$ and $v_iv_j\notin E(G)$. Two vertices $\{v_i,v_j\}$ and $\{v_{i'},v_{j'}\}$ of $\phi(G)$ are adjacent if $\{v_i,v_j,v_{i'},v_{j'}\}$ induces a $C_4$ in $G$.
\end{definition}
\begin{observation}\label{conflict}
    For a graph $G$, $box(G)\geq \chi(\phi(G))$.
\end{observation}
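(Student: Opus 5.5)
The plan is to use Theorem~\ref{Roberts} to fix interval graphs and then show that their number is at least $\chi(\phi(G))$. Let $t=box(G)$. If $t=0$, then $G$ is complete, so $\phi(G)$ has no vertices and $\chi(\phi(G))=0$; there is nothing to prove. Otherwise, Theorem~\ref{Roberts} gives interval graphs $I_1,\ldots,I_t$ on $V(G)$ with $E(G)=E(K_G)\cap E(I_1)\cap\cdots\cap E(I_t)$. From this representation we build a proper $t$-colouring of $\phi(G)$, which yields $\chi(\phi(G))\leq t$.

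\emph{The colouring.} Each vertex $\{u,v\}$ of $\phi(G)$ is a non-edge of $G$. Since $E(G)$ is the intersection of the $E(I_i)$ with $E(K_G)$, the pair $uv$ is missing from at least one $I_i$. Assign to $\{u,v\}$ the smallest such index $i$ as its colour.

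\emph{Properness.} Suppose two adjacent vertices $\{v_i,v_j\}$ and $\{v_{i'},v_{j'}\}$ of $\phi(G)$ both receive colour $l$. By the definition of $\phi(G)$, the four vertices $v_i,v_j,v_{i'},v_{j'}$ induce a $C_4$ in $G$. The non-edges of this $C_4$ are exactly its two diagonals, namely $v_iv_j$ and $v_{i'}v_{j'}$, and the other four pairs are edges of $G$.

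We now look at what $I_l$ induces on these four vertices.
\begin{itemize}
\item Every edge of $G$ lies in $I_l$, since $E(G)\subseteq E(I_l)$. So the four cycle edges are present in $I_l$.
\item Both diagonals are absent from $I_l$, because each pair received colour $l$.
\end{itemize}
Hence $I_l$ also induces a $C_4$ on these four vertices. This is impossible, because interval graphs are chordal and so contain no induced $C_4$. Therefore adjacent vertices of $\phi(G)$ get distinct colours, the colouring is proper, and $\chi(\phi(G))\leq t=box(G)$.

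The only point needing care is that the edges of $G$ survive in every $I_l$. This follows directly from the intersection representation, so no step here presents a real obstacle.
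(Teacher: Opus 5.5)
Your proposal is correct and follows the paper's proof: colour each non-edge of $G$ by the smallest index of an interval graph in the Roberts representation that omits it, and note that two adjacent vertices of $\phi(G)$ sharing a colour $l$ would force an induced $C_4$ in $I_l$. You spell out more carefully than the paper why $I_l$ induces a $C_4$ there (the $G$-edges survive in $I_l$ and both diagonals are absent), and you handle the trivial case $box(G)=0$ explicitly.
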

\begin{proof}
     Any valid box representation of $G$ defines a proper coloring of $\phi(G)$. To see this, suppose $I_1, I_2, \ldots , I_d$ is a box representation of $G$. By definition, $\forall v_i,v_j\in V(G)$ if $v_iv_j\notin E(G)$, then $\exists m\in [d]$ such that $v_iv_j\notin E(I_m)$.  Then we color the vertex $\{v_i,v_j\}$ of $\phi(G)$ with the color $m\in [d]$ where $m$ is the smallest index for which $v_iv_j\notin E(I_m)$. It is clear that this is a valid coloring of $\phi(G)$ as two adjacent vertices are always assigned two different colors (otherwise there is an interval graph that contains an induced $C_4$).
\end{proof}
    \begin{lemma}\label{2biii}
     Let $N=\prod_{i\in [a]} p_i^{n_i}$ be the prime factorization of $N$ with the primes $p_1<p_2< \cdots < p_a$ where $a\geq 1$ and $\forall i\in [a],$ $ n_i\in [3]$. Let $P=\{i\in [a]\colon n_i=1\}$. 
     Suppose $N$ satisfies the following,
     \begin{enumerate}[label=(\roman*)]
         \item $P\neq \emptyset$,
         \item there is exactly one index $j\in [a]$ with $n_j= 3$,
         \item $\exists l\in [a]\setminus \{j\}$ with $n_l=2$.
     \end{enumerate} Then $box(\gze)= a$.
\end{lemma}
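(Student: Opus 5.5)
The upper bound is immediate: $\gze$ is an induced subgraph of $\gz$, so Observation~\ref{sub} and Theorem~\ref{thmgz} give $box(\gze)\leq a$. All the work is in the lower bound $box(\gze)\geq a$. The plan is to use Observation~\ref{conflict}: find an induced subgraph $H$ of $\gze$ and show $\chi(\phi(H))\geq a$, where $\phi(H)$ is the $C_4$ conflict graph just defined. The Roberts'-graph method of Lemmas~\ref{rest} and~\ref{2bii} does not apply directly here. With only one exponent equal to $3$, the prime $p_j$ cannot be used to produce the extra independent pair each $p_k$ with $k\in P$ needs and still leave an independent pair of its own. That is why the conflict graph is needed.

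Fix $k_0\in P$, the index $j$ with $n_j=3$, and an index $l$ with $n_l=2$. I would include the following absent edges (independent pairs) in $H$.
\begin{itemize}
\item For $i\in[a]\setminus(P\cup\{j,l\})$, the pair $S_i=\{N/p_i^{n_i},N/p_i^{n_i-1}\}$.
\item For $k\in P\setminus\{k_0\}$, the pair $S'_k=\{N/p_k,N/(p_jp_k)\}$, exactly as in Lemma~\ref{2bii}.
\end{itemize}
These pairs, one for each of $a-3$ indices, will be pairwise adjacent in $\phi(H)$. This follows from the same product checks as in Lemmas~\ref{rest} and~\ref{2bii}: any two of the pairs induce $\overline{2K_2}\cong C_4$. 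Together they give a clique of size $a-3$ in $\phi(H)$.

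The remaining three colours must come from a gadget $X$ built only from divisors $N/d$ with $d\mid p_j^3p_l^2p_{k_0}$, possibly also dividing out $p_j$ once. I expect to take the six or seven vertices $N/p_{k_0}$, $N/(p_jp_{k_0})$, $N/p_l^2$, $N/p_l$, $N/(p_lp_j)$, $N/p_j^2$, $N/p_j^3$, together with suitable mixed terms such as $N/(p_j^2p_l)$. The goal is a set of absent edges inside $X$ whose conflict graph contains, for example, an odd cycle all of whose members conflict with a further pair. That would make it need $3$ colours. In addition, every absent edge of $X$ used in this odd structure must form an induced $C_4$ with every $S_i$ and every $S'_k$ above. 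These are routine checks of whether a product contains the full prime powers; note that any pair in $X$ involving $p_j$ must be checked against the $S'_k$, because those also use $p_j$. Then $\phi(H)$ contains the join of a $K_{a-3}$ with a subgraph of chromatic number at least $3$, so $\chi(\phi(H))\geq a$.

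The main obstacle is designing the gadget $X$. It must make $\phi(X)$ non-$2$-colourable, and its pairs must also stay compatible with the $S'_k$, which compete for the single spare power of $p_j$. I would first find by hand (or by small computation) the induced subgraph of $\Gamma_E(\mathbb{Z}_{p^3q^2r})$ with $p=p_j$, $q=p_l$, $r=p_{k_0}$, and confirm that its conflict graph has chromatic number $3$. This certifies $box(\Gamma_E(\mathbb{Z}_{p^3q^2r}))=3$. I would then check which of the pairs in this certificate survive the compatibility requirement with the $S'_k$, adjusting by multiplying by $p_l$ where needed.
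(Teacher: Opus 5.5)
\textbf{The gadget $X$ cannot exist when $|P|\ge 2$.} Leaving $X$ undesigned is not the only problem: in that case no admissible $X$ exists. So the scheme ``$K_{a-3}$ joined with a $3$-chromatic gadget'' gives at most $a-1$.

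Write $e_q(u)$ for the exponent of the prime $q$ in $u$. For a non-edge $\{u,v\}$ of $X$ to induce a $C_4$ with $S'_k=\{N/p_k,N/(p_jp_k)\}$, both $u$ and $v$ must be adjacent to $N/(p_jp_k)$. This means $e_{p_j}(u),e_{p_j}(v)\ge 1$. The natural $5$-cycle certifying $box(\Gamma_E(\mathbb{Z}_{p_j^3p_l^2p_{k_0}}))\ge 3$ already uses the pair $\{N/p_j^3,N/p_j^2\}$, which violates this. Moreover, on the vertices of $X$ with $e_{p_j}\ge 1$ the conflict graph is bipartite:
\begin{enumerate}
\item[(a)] Two non-edges failing at the same prime $p$ never conflict. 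If $n$ is the exponent of $p$ in $N$, any two common neighbours $w,x$ of $u,v$ satisfy $e_p(w)+e_p(x)\ge 2n-2\min(e_p(u),e_p(v))>n$.
\item[(b)] A non-edge $\{u,v\}$ failing at $p_j$ now has $e_{p_j}(u)=e_{p_j}(v)=1$.
\item[(c)] Suppose a conflict partner of $\{u,v\}$ fails at $p_l$. That partner contains a vertex with $e_{p_l}=0$, forcing $e_{p_l}(u)=e_{p_l}(v)=2$. Then $u,v$ differ at $p_{k_0}$, one of them has $e_{p_{k_0}}=0$, and no partner can fail at $p_{k_0}$.
\item[(d)] Symmetrically, if a partner fails at $p_{k_0}$, then $u,v$ differ at $p_l$ and no partner fails at $p_l$.
\end{enumerate}
Now give colour $0$ to pairs failing at $p_l$ but not $p_j$, and colour $1$ to pairs failing only at $p_{k_0}$. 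Give each $p_j$-failing pair the colour opposite to all its neighbours. By (a)--(d) this is a proper $2$-colouring.

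Changing the $S'_k$ does not rescue the join. A pair failing at $p_k$ must borrow from a prime where its two ends differ. Borrowing from another clique prime (exponent $\le 2$) destroys the $C_4$ with that prime's pair. Borrowing from $p_l$ or $p_{k_0}$ instead restricts $X$ so that no pair fails at that prime at all. Your plan does work when $|P|=1$: there are no $S'_k$, and the $5$-cycle serves as $X$.

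\textbf{The missing idea} is to give up the full join. For $k\in P$ the paper takes two cliques in $\phi$:
\begin{itemize}
\item $v_{2,k}=\{N/(p_lp_k),N/p_k\}$, borrowing from $p_l$;
\item $v_{5,k}=\{N/(p_jp_k),N/p_k\}$, borrowing from $p_j$.
\end{itemize}
It adds $v_1=\{N/(p_j^2p_l),N/p_j^2\}$, $v_3=\{N/p_j^3,N/p_j^2\}$ and $v_4=\{N/p_l^2,N/p_l\}$. Then $v_1$ and $v_3$ conflict with every $v_{2,k}$, $v_1$ and $v_4$ conflict with every $v_{5,k}$, and $v_3$ conflicts with $v_4$. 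However, $v_3$ does not conflict with the $v_{5,k}$, nor $v_4$ with the $v_{2,k}$; this is exactly the obstruction above.

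Suppose only $|P|+1$ colours were available. The cliques $\{v_1\}\cup\{v_{2,k}\}$ and $\{v_1\}\cup\{v_{5,k}\}$ would force both $v_3$ and $v_4$ to take the colour of $v_1$, a contradiction. So this part needs $|P|+2$ colours. Your pairs $S_i$ for the remaining exponent-$2$ primes conflict with all of these pairs and with each other. Adding them as a clique joined to this structure (equivalently, the paper's join with $\Gamma_E(\mathbb{Z}_{N''})$) gives $\chi(\phi(\gze))\ge a$.
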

\begin{proof}
From Theorem~\ref{thmgz}, we know that $box(\gze)\leq box(\gz)\leq a$. Therefore, we will prove the lower bound.

    Consider the graph $\gzj{pq^2r^3 }$. 
    Consider the `$C_4$ conflict graph' $\phi(\gzj{pq^2r^3 })$ of $\gzj{pq^2r^3 }$. 
    
    Let $v_1\colon =\{qr^3,q^2r^3\},$ $ v_2\colon =\{pqr,pq^2r\},$ $v_3\colon =\{q^2r^2,q^2r^3\},$ $v_4\colon =\{pr^3,pqr^3\}$ and $v_5\colon =\{pq^2,pq^2r\}$. 
    
    It is easy to check that $v_1v_2v_3v_4v_5$ is a $5$-cycle in $\phi(\gzj{pq^2r^3 })$. Therefore, $\chi(\phi(\gzj{pq^2r^3 }))\geq 3$ as it contains an odd cycle. Therefore, $box(\gzj{pq^2r^3 })\geq 3$ by Observation~\ref{conflict}.   

    Now consider the case when $N=p_1^2p_2^3\prod_{i=3}^{a}p_i$. Consider the `$C_4 $ conflict graph' $\phi(\gze)$.  
    \begin{itemize}
        \item Let $v_1\colon=\left\{\frac{N}{p_2^2p_1},\frac{N}{p_2^2}\right\}$; $v_3\colon =\left\{\frac{N}{p_2^3},\frac{N}{p_2^2}\right\}$; $v_4\colon =\left\{\frac{N}{p_1^2},\frac{N}{p_1}\right\}$.
        \item For $k\in [a]\setminus \{1,2\}$, define $v_{2,k}\colon =\left\{\frac{N}{p_1p_k},\frac{N}{p_k}\right\}$ and $v_{5,k}\colon =\left\{\frac{N}{p_2p_k},\frac{N}{p_k}\right\}$ ;
    \end{itemize} 
    Recall that two vertices of $\phi(\gze)$ have an edge between them when their corresponding sets induce a $C_4$. Note that for $k\in [a]\setminus \{1,2\}$, the vertices $v_{2,k}$ define a clique of order $a-2$ in $\phi(\gze)$. Similarly for $k\in [a]\setminus \{1,2\}$, the vertices $v_{5,k}$ form a clique of order $a-2$ in $\phi(\gze)$. The vertex $v_3$ is adjacent to $v_4$. For $k\in [a]\setminus \{1,2\}$, the vertices $v_{2,k}$ are all adjacent to $v_1$ and $v_3$. For $k\in [a]\setminus \{1,2\}$, the vertices $v_{5,k}$ are all adjacent to $v_1$ and $v_4$.
    
    We claim that $\chi(\phi(\gze))\geq a$. Suppose for the sake of contradiction, we have $\chi(\phi(\gze))\leq a-1$. Let $v_1$ be colored with the color $1$. Then for $k\in [a]\setminus \{1,2\}$, the vertices $v_{2,k}$ use up all the colors from $\{2, 3, \ldots ,a-1\}$. The only available color for $v_3$ is $1$. In a similar way, it can be argued that $v_4$ must get color $1$. But $v_3$ and $v_4$ cannot get the same color as they are adjacent. This is a contradiction. Since $\chi(\phi(\gze))\geq a$, we must have $box(\gze)\geq a$ by Observation~\ref{conflict}. 

    Finally, suppose $N$ is as defined in the statement of the Lemma. Let $N'=\prod_{i\in P\cup \{j,l\}}p_i^{n_i}$ and let $N''=\frac{N}{N'}$. Let $A=\{u\in V(\gze)\colon N' \text{ divides }u\}$ and $B=\{u\in V(\gze)\colon N'' \text{ divides }u\}$. It is easy to check that the graph induced on the vertices of $A\setminus \{N'\}$ is isomorphic to $\gzi{N''}{}$ and the graph induced on the vertices of $B\setminus \{N''\}$ is isomorphic to $\gzi{N'}{}$. Moreover, the graph induced on the vertices of $(A\cup B)\setminus \{N',N''\}$ is the graph $\gzi{N''}{}\vee \gzi{N'}{}$. Note that $box(\gzi{N'}{})\geq |P|+2$ as shown above. 
    
    If $N''=1$, then $box(\gze)\geq box(\gzi{N'}{})\geq |P|+2=a$.
    
    If $N''$ has at least two prime divisors, by Lemma~\ref{cb}, $box(\gzi{N''}{})\geq a-|P|-2$. By Observations~\ref{sub} and ~\ref{join}, $box(\gze)\geq box(\gzi{N'}{}\vee \gzi{N''}{})= box(\gzi{N'}{})+box( \gzi{N''}{})\geq a$. 

    If $N''=q^2$ for some prime number $q$, then the graph induced on the vertices $\left\{\frac{N}{q},\frac{N}{q^2}\right\}\cup B\setminus \{N''\}$ is isomorphic to $\overline{K_2}\vee  \gzi{N'}{}$. Hence, by Observations~\ref{sub} and ~\ref{join}, $box(\gze)\geq box(\overline{K_2}\vee  \gzi{N'}{})\geq 1+|P|+2=a$.
\end{proof}
\begin{lemma}\label{2biv}
     Let $N=\prod_{i\in [a]} p_i^{n_i}$ be the prime factorization of $N$ with the primes $p_1<p_2< \cdots < p_a$ where $a\geq 2$ and $\forall i\in [a],$ $ n_i\in [3]$. Let $P=\{i\in [a]\colon n_i=1\}$.

     If there is exactly one index $j\in [a]$ such that $n_j= 3$ and $P=[a]\setminus \{j\}$, then $box(\gze)= a-1$.
\end{lemma}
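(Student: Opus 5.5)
We prove the two inequalities separately. The lower bound $box(\gze)\geq a-1$ comes from an induced Roberts' graph. The upper bound $box(\gze)\leq a-1$ comes from writing $E(\gze)$, via Theorem~\ref{Roberts}, as an intersection of $a-1$ interval supergraphs, one for each prime in $P$. Throughout, write $N=p_j^3\prod_{k\in P}p_k$ with $|P|=a-1\geq 1$. By Observation~\ref{obser}, a vertex is a divisor $u=p_j^{e(u)}\prod_{k\in U}p_k$ with $e(u)\in\{0,1,2,3\}$ and $U\subseteq P$, excluding $1$ and $N$. Two distinct vertices $u,v$ are adjacent iff $e(u)+e(v)\geq 3$ and $U\cup V=P$.

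\textbf{Lower bound.} For $k\in P$ take $S'_k=\left\{\frac{N}{p_k},\frac{N}{p_jp_k}\right\}$, as in Lemma~\ref{rest}. Each $S'_k$ is independent, because the product of its two elements misses $p_k$. For distinct $k,k'\in P$, all four cross products are divisible by $N$: the exponent of $p_j$ is at least $4$, and every prime of $P$ occurs in one of the two factors. So $S'_k\cup S'_{k'}$ induces $\overline{2K_2}$, and $\bigcup_{k\in P}S'_k$ induces the Roberts' graph $\overline{(a-1)K_2}$. By Observations~\ref{sub} and~\ref{rob}, $box(\gze)\geq a-1$. For $a=2$ this just says that $\gze$ is not complete.

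\textbf{Upper bound.} For each $k\in P$, the graph $T_k$ in which $uv$ is an edge iff $p_k$ divides $u$ or $v$ is a threshold graph: give weight $1$ to multiples of $p_k$, weight $0$ to the rest, and use threshold $1$. Every $T_k$ contains $E(\gze)$, and $\bigcap_{k\in P}E(T_k)$ consists exactly of the pairs with $U\cup V=P$. The remaining non-edges are the pairs with $U\cup V=P$ but $e(u)+e(v)\leq 2$. These must be absorbed into one of the $a-1$ graphs without adding a new one. Fix $k_0\in P$ and replace $T_{k_0}$ by $H$, in which $uv$ is an edge iff ($p_{k_0}\mid u$ or $p_{k_0}\mid v$) and $e(u)+e(v)\geq 3$. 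Then $H\supseteq\gze$, and $E(H)\cap\bigcap_{k\neq k_0}E(T_k)=E(\gze)$. It remains to show that $H$ is an interval graph.

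\textbf{Main obstacle: showing $H$ is interval.} In $H$, a vertex's neighbourhood depends only on its type $(x,e)$, where $x=1$ iff $p_{k_0}\mid u$. So $H$ is a blow-up of a graph on at most $8$ types, in which each type class is a clique or an independent set. Only the types $(1,e)$ with $e\geq 2$ are cliques. The non-clique classes $(0,\ast)$ and $(1,0),(1,1)$ behave as independent sets of twins, and each can be represented by disjoint parallel copies of one interval placed inside the region its neighbours cover. I would first try an explicit interval model. The types $(1,3)$ and $(1,2)$ get long intervals; then $(0,3)$, $(0,2)$, $(1,1)$, $(0,1)$, $(1,0)$ get successively shorter intervals nested according to their required thresholds on $e$. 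The key consistency requirements are, for example, that $(1,0)$ sees only $(\ast,3)$ and $(1,1)$ sees only $e\geq 2$, while $(0,e)$ sees only $(1,e')$ with $e'\geq 3-e$. If a direct placement fails, the fallback is the forbidden-subgraph route: check that the $8$-vertex type graph, with independent twin classes blown up, has no induced $C_4$ and no asteroidal triple. The computation is finite but is where care is needed.

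If $H$ turns out not to be interval, I would instead shift the $e$-condition onto the graphs differently: intersect each $T_k$ with the threshold condition ``$e(u)+e(v)\geq 3$ or $p_k$ divides both $u$ and $v$'', and check that these modified graphs are still interval and together still cover all non-edges.
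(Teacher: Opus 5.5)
\emph{Verdict: there is a genuine gap in the upper bound.} Your lower bound is the paper's and is fine. For $a=2$ your $H$ is $\Gamma_E(\mathbb{Z}_N)$ itself, which is interval.

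For $a\ge 3$, however, the step you flagged fails: $H$ contains an induced $C_4$. Placing twin classes as ``disjoint parallel copies'' is impossible when two \emph{independent} classes are completely joined to each other. Here the classes $(1,1)$ and $(0,3)$ each have $2^{a-2}\ge 2$ vertices and are completely joined in $H$. For instance, with $k_1\in P\setminus\{k_0\}$, the vertices $p_jp_{k_0},\ p_j^3,\ p_jp_{k_0}p_{k_1},\ p_j^3p_{k_1}$ induce a $C_4$. The same happens for $(1,1)$ with $(0,2)$, and for $(1,0)$ with $(0,3)$.

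Nor can any other replacement graph rescue the scheme while the remaining $a-2$ graphs are the threshold graphs $T_k$. Take the vertices $\frac{N}{p_j^2p_{k_1}},\frac{N}{p_jp_{k_0}},\frac{N}{p_j^2},\frac{N}{p_{k_0}}$ in cyclic order; consecutive ones are adjacent in $\Gamma_E(\mathbb{Z}_N)$. Both diagonals are non-edges of $\Gamma_E(\mathbb{Z}_N)$: the first has $p_j$-exponent sum $2$, and the second misses $p_{k_0}$. Yet both diagonals are edges of every $T_k$ with $k\neq k_0$. So every admissible replacement must contain this induced $C_4$.

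Your fallback fails as well. The vertices $\frac{N}{p_j^3}$ and $\frac{N}{p_j^2}$ are both divisible by every $p_k$, $k\in P$, so they are adjacent in all of your modified graphs. They are nevertheless non-adjacent in $\Gamma_E(\mathbb{Z}_N)$.

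The missing idea is that the non-edges with small $p_j$-exponent sum cannot all be absorbed by one graph; the work must be shared among all $a-1$ graphs. The paper keeps your $H$ as the graph $I_j$ for the distinguished prime $p_l$ (your $p_{k_0}$). It then adds edges making the classes $(1,1)$ and $(0,3)$ cliques, which kills these $C_4$'s and admits an explicit interval model.

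To compensate, each $T_i$ with $i\in P\setminus\{l\}$ is replaced by an interval subgraph $I_i\subseteq T_i$. This $I_i$ additionally makes multiples of $p_i$ non-adjacent to non-multiples of $p_i$ in two situations:
(a) among multiples of $p_l$ with $p_j$-exponent at most $1$;
(b) among non-multiples of $p_l$.

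Two distinct vertices of class $(1,1)$ differ at some $p_z$ with $z\in P\setminus\{l\}$, so they are separated by (a) in $I_z$. Two distinct vertices of class $(0,3)$ miss $p_l$ and differ at some $p_i$, so they are separated by (b) in $I_i$.
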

\begin{proof}
    We first show the lower bound. 
     For $k\in P$ define $S_k=\left\{\frac{N}{p_k}, \frac{N}{p_kp_j}\right\}$. Then $\bigcup_{k\in P}S_k$ is the Roberts' graph on $2(a-1)$ vertices.
         
    Now we show the upper bound. Fix $l\in P$. For any prime number $q$ and any positive integer $x$, define the function $f(x,q)$ is the non-negative integer such that $q^{f(x,q)}$ divides $x$ but $q^{f(x,q)+1}$ does not divide $x$. 
    
    For $i\in P\setminus \{l\}, s\in \{0,1\}$ and $t\in \{0,1\}$, define the sets $$F_{i,s,t,0}:=\{u\in \gze\colon  f(u,p_i)=s,f(u,p_l)=t\t{ and }f(u,p_j)\in \{0,1\} \} ;$$ $$ F_{i,s,t,1}:=\{u\in \gze\colon f(u,p_i)=s,f(u,p_l)=t\t{ and }f(u,p_j)\in \{2,3\}\}.$$ 
    Additionally, we define two more sets, $$F_{i,0,0,(*)}:=\{u\in \gze\colon f(u,p_i)=0\t{ and }f(u,p_l)=0\};$$ $$F_{i,1,0,(*)}:=\{u\in \gze\colon f(u,p_i)=1\t{ and }f(u,p_l)=0\}.$$ 
    
    For each $i\in P\setminus \{l\}$, we create an interval graph $I_i$ as shown in Figure~\ref{f}. For the sake of completeness, we also include an explicit mapping of the vertices of $I_i$ to intervals on the real line as follows.
    $$I_i(v)=\begin{cases}
        [0,4.5]&\t{ if } v\in F_{i,1,1,1} \\
  [2,4.5] & \t{ if } v\in F_{i,1,0,(*)}\\
  [2+\frac{v}{2N},2+\frac{v}{2N}] & \t{ if } v\in F_{i,0,1,1}\\
  [\frac{v}{2N},\frac{v}{2N}] & \t{ if } v\in F_{i,0,0,(*)}\\
  [3.5+\frac{v}{2N},3.5+\frac{v}{2N}] & \t{ if } v\in F_{i,0,1,0}\\
[0,3] & \t{ if } v\in F_{i,1,1,0}.
    \end{cases}$$
    The following claim is easy to check by inspection.
    \begin{claim}\label{claim1}
        For $i\in P\setminus \{l\}$, if $f(u,p_i)+f(v,p_i)=0$, then $uv\notin E(I_i)$.
    \end{claim}

    For $t\in \{0,1\}$, $c\in \{0,1,2,3\}$, define the set $$J_{t,c}:=\{u\in \gze\colon  f(u,p_l)=t\t{ and }f(u,p_j)=c\} .$$

    We create an interval graph $I_j$ as shown in Figure~\ref{s}. For the sake of completeness, we also include an explicit mapping of the vertices of $I_j$ to intervals on the real line as follows.
    $$I_j(v)=\begin{cases}
    [0,6]& \t{ if }v\in J_{1,3}\\
  [4,6]& \t{ if }v\in J_{0,3}\\
  [1.5,5] & \t{ if }v \in J_{1,2}\\
 [2.75,5] &\t{ if }v\in J_{1,1}\\
 [1.5+\frac{v}{2N},1.5+\frac{v}{2N}] & \t{ if } v\in J_{0,1}\\
 [\frac{v}{2N},\frac{v}{2N}] &\t{ if } v\in J_{0,0}\\
   [5.25 + \frac{v}{2N}, 5.25 +\frac{v}{2N}] & \t{ if } v\in J_{1,0}\\ 
   [2.75+\frac{v}{2N},2.75 +\frac{v}{2N}]& \t{ if }v\in J_{0,2}\end{cases}$$
    
    The following claim can be checked by inspection. 
    \begin{claim}\label{claim2}
        $u,v\in V(I_j)$ are adjacent if and only if one of the following holds: $(i)$ $f(u,p_j)+f(v,p_j)\geq 3$ and $f(u,p_l)+f(v,p_l)\geq 1$, $(ii)$ $u,v\in J_{0,3}$ or $(iii)$ $u,v\in J_{1,1}$. 
    \end{claim} 

    We claim $E(\gze)=\bigcap_{k\in [a]\setminus\{l\}}E(I_k)$. It is easy to check by inspection that for $k\in [a]\setminus\{l\}$, $I_k$ is a supergraph of $\gze$. For distinct $u,v\in V(\gze)$, we will show that if $uv\notin E(\gze)$, then $uv\notin \bigcap_{k\in [a]\setminus \{l\}}E(I_k)$. To this end, suppose $uv\notin E(\gze)$. This means that $\exists y\in [a]$ such that $f(u,p_y)+f(v,p_y)<n_y$. 
    
    \begin{enumerate}
        \item[Case 1:] If $y=j$, then $uv\notin E(I_j)$ unless $u,v\in J_{1,1}$. Since $u$ and $v$ are distinct elements, $\exists z\in P\setminus \{l\}$ such that $f(u,p_z)\neq f(v,p_z)$. Without loss of generality, assume $f(u,p_z)=1$ and $f(v,p_z)=0$. Then $u\in F_{z,1,1,0}$ and $v\in F_{z,0,1,0}$. By construction, $uv\notin E(I_z)$. 
        \item[Case 2:] If $y=l$, then for $i\in P\setminus \{l\}$, we have $uv\notin E(I_i)$ unless $u,v\in F_{i,1,0,(*)}$. Suppose $\forall i\in  P\setminus \{l\}$, it is true that $u,v\in F_{i,1,0,(*)}$. Since $u$ and $v$ are distinct elements, it follows that $f(u,p_j)\neq f(v,p_j)$. As a consequence, it cannot happen that both $u$ and $v$ belong to $J_{0,3}$. This means that all the conditions of Claim~\ref{claim2} are violated. This means that $uv\notin E(I_j)$.
        \item [Case 3:] If $y\in P\setminus \{l\}$, then by Claim~\ref{claim1}, $uv\notin E(I_y)$.
    \end{enumerate}
   
       \begin{figure}[t!]
           \centering
       \begin{subfigure}[t]{1\textwidth}
       \centering
       \resizebox{0.25\textwidth}{!}{
            \begin{tikzpicture}[>=stealth,scale=1.5]
  \tikzstyle{filled} = [draw, circle, fill=black, minimum size=12pt, inner sep=0pt]
  \tikzstyle{unfilled} = [draw, circle, fill=white, minimum size=12pt, inner sep=0pt]

  \node[unfilled,label=above:{$F_{i,0,0,(*)}$}] (v00) at (0,1) {};
  \node[unfilled,label=above:{$F_{i,0,1,1}$}] (v01) at (1,1) {};
  \node[filled,label=below:{$F_{i,1,0,(*)}$}] (v10) at (0,0) {};
  \node[filled,label=below:{$F_{i,1,1,1}$}] (v11) at (1,0) {};
   \node[filled,label=below:{$F_{i,1,1,0}$}] (v111) at (2,0) {};
    \node[unfilled,label=above:{$F_{i,0,1,0}$}] (v011) at (2,1) {};

  \draw (v10) -- (v11);
  \draw (v10) -- (v01);
  \draw (v11) -- (v00);
  \draw (v11) -- (v01);
  \draw (v011)--(v10);
  \draw (v11)--(v011);
  \draw (v111) -- (v00);
  \draw (v111) -- (v01);
  \draw (v10) -- (v111);
  \draw (v111)--(v11);
  \draw (v10) to[bend right=20](v111);
\end{tikzpicture}}
 \hspace{100pt}
 \resizebox{0.35\textwidth}{!}{\begin{tikzpicture}[x=1cm,y=1cm]
\draw[line width=2pt] (6,3.125) -- (10.5,3.125) node[right,below]{$F_{i,1,1,1}$} ;

  \draw[line width=2pt] (8,3.5) --(10.5,3.5) node[right] {$F_{i,1,0,(*)}$};

  \draw[line width=0.5pt, dashed] (8,3.875)  -- (9,3.875) node[midway,above] {$F_{i,0,1,1}$};
  \draw[line width=0.5pt,dashed] (6,3.5)  -- (7,3.5) node[midway,above] {$F_{i,0,0,(*)}$};
     \draw[line width=0.5pt, dashed] (9.5,3.875)  -- (10.5,3.875) node[midway,above] {$F_{i,0,1,0}$};
   \draw[line width=2pt] (6,2.75) -- (9,2.75) node[midway,below]{$F_{i,1,1,0}$} ;
  \end{tikzpicture}}
  \subcaption{Schematic diagram of the interval graphs $I_i$ when $i\in P\setminus \{l\}$.}
  \label{f}
       \end{subfigure}
       \\[10pt]
\begin{subfigure}[t]{1\textwidth}
\centering
  \resizebox{0.325\textwidth}{!}{\begin{tikzpicture}[>=stealth,scale=1.5]
  \tikzstyle{filled} = [draw, circle, fill=black, minimum size=12pt, inner sep=0pt]
  \tikzstyle{unfilled} = [draw, circle, fill=white, minimum size=12pt, inner sep=0pt]

  \node[filled,label=above:{$J_{1,2}$}] (v21) at (0,1) {};
  \node[filled,label=above:{$J_{1,1}$}] (v11) at (1,1) {};
  \node[filled,label=left:{$J_{0,3}$}] (v30) at (-0.1,0.5) {};
  \node[unfilled,label=right:{$J_{0,2}$}] (v20) at (1.1,0.5) {};
  \node[filled,label=below:{$J_{1,3}$}] (v31) at (0.5,0) {};
  \node[unfilled,label=below:{$J_{0,0}$}] (v00) at (1.8,0) {};
  \node[unfilled,label=below:{$J_{1,0}$}] (v01) at (-0.75,-0.5) {};
  \node[unfilled,label=above:{$J_{0,1}$}] (v10) at (-1.1,0.5) {};

  \draw (v21) -- (v11);
  \draw (v21) -- (v30);
  \draw (v31) -- (v21);
  \draw (v31) -- (v11);
  \draw (v30)--(v11);
  \draw (v20)--(v21);
    \draw (v20) -- (v11);
  \draw (v31) -- (v30);
  \draw (v31) -- (v20);
  \draw (v31) -- (v00);
   \draw (v31) -- (v01);
    \draw (v31) -- (v10);
  \draw (v01)--(v30);
  \draw (v10)--(v21);
 
\end{tikzpicture}}
 \hspace{50pt}
 \resizebox{0.4\textwidth}{!}{\begin{tikzpicture}

 \draw[line width=2pt] (5,2.125) -- (11,2.125) node[below]{{$J_{1,3}$}};
  
  \draw[line width=2pt] (9,2.5) -- (11,2.5)node[right]{$J_{0,3}$};
  
  \draw[line width=2pt] (6.2,1.75)node[below]{$J_{1,2}$} -- (10,1.75);
 \draw[line width=2pt] (7.5,1.375) -- (10,1.375)node[midway,below]{$J_{1,1}$};
 \draw[line width=0.5pt,dashed] (6.5,2.5)--(7.25,2.5) node[midway,above]{$J_{0,1}$};
  \draw[line width=0.5pt,dashed] (5,2.5)  -- (5.75,2.5) node[midway,above]{$J_{0,0}$};
   \draw[line width=0.5pt,dashed] (10.25,2.875)  -- (11,2.875)node[midway,above]{$J_{1,0}$}; 
   \draw[line width=0.5pt,dashed] (7.75,2.5)  -- (8.5,2.5)node[midway,above]{$J_{0,2}$};

\end{tikzpicture}}
\subcaption{Schematic diagram of the interval graph $I_j$.}
\label{s}
\end{subfigure}

\caption{A schematic diagram of the graphs $I_i$ for $i\in P\setminus \{l\}$ (top), and the graph $I_j$ (bottom). Each node is labeled with a set and represents a group of vertices from that set. Solid nodes represent a clique and hollow nodes represent independent sets. Edges are present between two nodes if and only if every vertex in one node is adjacent to every vertex in the other node. The edges between the vertices of a solid node and the edges going between two vertices in two different nodes account for all the edges of these graphs. On the right, the corresponding intervals are illustrated for each vertex. The intervals corresponding to solid nodes are drawn long and thick to indicate that the vertices of a node are superposed on the same interval. The intervals corresponding to hollow nodes are drawn thin and narrow.}
\label{in}
\end{figure}
\end{proof}
\begin{lemma}\label{2bv}
      Let $N=\prod_{i\in [a]} p_i^{n_i}$ be the prime factorization of $N$ with the primes $p_1<p_2< \cdots < p_a$ where $a\geq 2$ and $\forall i\in [a],$ $ 1\leq n_i\leq 2$. If $\emptyset\subsetneq P\subsetneq [a]$, $box(\gze)=a-1$.
\end{lemma}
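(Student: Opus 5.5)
The plan is to prove the two inequalities separately. The upper bound comes from the relabelling trick already used in Lemma~\ref{gzep} together with Theorem~\ref{thmgz}. The lower bound comes from exhibiting an induced Roberts' graph on $2(a-1)$ vertices.

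\emph{Upper bound.} By Observation~\ref{obser}, the graph $\gze$ depends only on the exponent pattern $(n_1,\ldots,n_a)$ and not on the actual primes. A vertex is a divisor $\prod_i p_i^{\alpha_i}$ with $0\le \alpha_i\le n_i$, excluding $1$ and $N$. Two vertices are adjacent iff $\alpha_i+\beta_i\ge n_i$ for every $i$.

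Since $P\neq\emptyset$, fix $m\in P$ and set
\[
N'=2\cdot\prod_{i\in[a]\setminus\{m\}}p_i^{n_i},
\]
where we may assume $p_m\neq 2$; otherwise take $N'=N$. The map $\prod_i p_i^{\alpha_i}\mapsto 2^{\alpha_m}\prod_{i\neq m}p_i^{\alpha_i}$ is then an isomorphism $\gze\cong \Gamma_E(\mathbb{Z}_{N'})$, exactly as in the proof of Lemma~\ref{gzep}. Now $N'$ has smallest prime $2$ with exponent $1$, and every other exponent is at most $2$ by hypothesis. So Theorem~\ref{thmgz} gives $box(\Gamma(\mathbb{Z}_{N'}))=a-1$. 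Since $\Gamma_E(\mathbb{Z}_{N'})$ is an induced subgraph of $\Gamma(\mathbb{Z}_{N'})$, Observation~\ref{sub} yields $box(\gze)\le a-1$.

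\emph{Lower bound.} Since $P\neq[a]$, fix $j\in[a]\setminus P$, so $n_j=2$. Define the pairs
\[
S_k=\left\{\tfrac{N}{p_k},\tfrac{N}{p_kp_j}\right\}\ (k\in P),\qquad
S_i=\left\{\tfrac{N}{p_i^{2}},\tfrac{N}{p_i}\right\}\ (i\in[a]\setminus(P\cup\{j\})).
\]
This gives $|P|+(a-|P|-1)=a-1$ pairs. I will check the required properties directly.
\begin{itemize}
\item Each pair is an independent set. In $S_k$ the product has $p_k$-exponent $0<1$. In $S_i$ the product has $p_i$-exponent $1<2$.
\item Any two distinct pairs are completely joined. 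Each element of a pair misses at most one full power of its own prime, while the element of the other pair contains that prime to its full exponent $n$. The only shared deficiency is at $p_j$, where the elements $N/(p_kp_j)$ carry $p_j$-exponent $1$. Their product with any other chosen vertex has $p_j$-exponent at least $1+1=2=n_j$, since every other chosen vertex has $p_j$-exponent $1$ or $2$.
\end{itemize}
Hence the union of the pairs induces $\overline{(a-1)K_2}$, and Observations~\ref{sub} and~\ref{rob} give $box(\gze)\ge a-1$.

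\emph{Main obstacle.} The only delicate step is this exponent bookkeeping at $p_j$, in particular that two vertices $N/(p_kp_j)$ and $N/(p_{k'}p_j)$ are still adjacent (their $p_j$-exponents sum to exactly $2$). For $a=2$ the construction gives a single non-edge, matching $a-1=1$, since the graph is not complete. Both bounds together give $box(\gze)=a-1$.
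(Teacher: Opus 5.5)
Your lower bound is the paper's construction exactly, and your exponent check at $p_j$ is correct. The gap is in the upper bound: your case split misses the case $p_1=2$, $n_1=2$, i.e.\ $4\mid N$.

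In that case $1\notin P$, so every $m\in P$ has $p_m\neq 2$, and the clause ``otherwise take $N'=N$'' never applies. Your $N'=2\cdot\prod_{i\neq m}p_i^{n_i}$ then contains $2\cdot 2^2=2^3$ and has only $a-1$ distinct prime divisors. The map $\prod_i p_i^{\alpha_i}\mapsto 2^{\alpha_m}\prod_{i\neq m}p_i^{\alpha_i}$ is not even injective, since both $p_m$ and $p_1=2$ are sent to $2$. Your claim that $N'$ has $2$ to the first power is also false there, so Theorem~\ref{thmgz} cannot be applied. For $N$ odd, or for $n_1=1$ with the choice $m=1$, your argument is fine.

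The repair is short, and it is what the paper does: exchange the roles of $2$ and $p_m$. Take $N'=p_mN/2$, so that $2$ appears to the first power and $p_m$ is squared, and use the isomorphism that swaps the exponents of $2$ and $p_m$ in each divisor. Alternatively, use your own opening remark that $\Gamma_E(\mathbb{Z}_N)$ depends only on the exponent pattern, and set $N'=2\cdot\prod_{i\neq m}q_i^{n_i}$ with distinct odd primes $q_i$. With either fix the rest of your upper bound goes through, and the proof coincides with the paper's.
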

\begin{proof}
    When $N$ is not divisible by 2, the proof of the upper bound is similar to that of the proof of Lemma~\ref{gzep}. Suppose $N$ is divisible by $2$ (that is $p_1=2$) and $n_{\ell}=1$ for some $\ell\in [a]$ ($\ell=1$ is allowed). Let $f(\cdot,\cdot)$ be as defined in Lemma~\ref{2biv}. Define the following isomorphism \[N'=\frac{2\cdot p_{\ell}^{n_1}\cdot N}{2^{n_1}\cdot p_l}; \quad u\in \gze\mapsto \frac{2^{f(u,p_{\ell})}\cdot p_{\ell}^{f(u,2)}\cdot u}{2^{f(u,2)}\cdot p_l^{f(u,p_{\ell})}} \in \gzi{N'}{}.\] Therefore, by Theorem~\ref{thmgz}, $box(\gze)=box(\gzi{N'}{})\leq box(\Gamma(\Z_{N'}))=a-1$.
    
    We will now prove the lower bound. Choose $j\notin P$. For $k\in P$, put $
S_k=\left\{\frac{N}{p_kp_j},\frac{N}{p_k}\right\},$
and for $i\notin P\cup\{j\}$, put $
S_i'=\left\{\frac{N}{p_i^2},\frac{N}{p_i}\right\}.$ Then, $$\bigcup _{k\in P}S_k\cup \bigcup_{i\notin P\cup \{j\}}S_i' $$ forms a Roberts' graph on $2(a-1)$ vertices. Therefore, $box(\gze)\geq a-1$.
     
\end{proof}
\section{Proof of Theorem~\ref{thmthgze}}
\begin{proof}[Proof of Theorem~\ref{thmthgze}]
We show the following four claims.
\begin{enumerate}
        \item If $P=\emptyset$, then we have $\dimt(\gze)=a$.
        \item If $P=[a]$ and $a\geq 3$ then $\dimt(\gze)=a-1$.
        \item If $P=[a]$ and $a\leq 2$ then $\dimt(\gze)=0$.
       
        \item If $\emptyset \subsetneq P\subsetneq [a]$ then we have $ \dimt(\gze)=a$.
    \end{enumerate}
    Claim 1 follows from Lemma~\ref{cb}, Theorem~\ref{th} and observing that $box(\gze)\leq \dimt(\gze)\leq \dimt(\gz)$. 
    
    To see Claim 2, take $N'=2\times \prod_{i=2}^ap_i$ and observe that there is an isomorphism between $\gze$ and $\gzi{N'}{}$ (see the proof of Lemma~\ref{gzep}). Then the claim follows from Theorem~\ref{th},Lemma~\ref{gzep} and observing that $ box(\gze)\leq \dimt(\gze)=\dimt(\gzi{N'}{})\leq \dimt(\Gamma(\Z_{N'}))\leq a-1$.
    
    Claim 3 follows trivially as in this case $\gze $ is either empty or a single edge, and therefore a complete graph. 
    
    We will show claim 4. Since $ P\subsetneq [a]$, there exists an index $j\in [a]\setminus P$. Define $S_j=\left\{\frac{N}{p_j^{n_j}},\frac{N}{p_j^{n_j-1}}\right\}$. For $k\in P$, define the sets $S_k=\left\{\frac{N}{p_kp_j},\frac{N}{p_k}\right\}$ and $\forall i\in [a]\setminus (P\cup \{j\})$, define $S_i=\left\{\frac{N}{p_i^{n_i}},\frac{N}{p_i^{n_i-1}}\right\}$. It is easy to check that for $r\in [a]$, $S_r$ is an independent set. Also, for distinct $r,r'\in [a]$, $S_r\cup S_{r'}$ induces a $C_4,P_4$ or a $2K_2$. Moreover, for any supergraph that contains $S_r$ and $S_{r'}$ as non-edges, the graph induced by $S_r\cup S_{r'}$ is a $C_4,P_4$ or a $2K_2$. Therefore, in any threshold super graph of $\gze$, at most one of the sets in the collection $\{S_r\}_{r\in [a]}$ can exist as an independent set. Since each of these sets must exist as an independent set in some threshold graph, we must have that $\dimt(\gze)\geq a$. By Theorem~\ref{th}, $\dimt(\gze)\leq a$. 
\end{proof}
\bibliographystyle{plain}
\bibliography{cas-refs}
\end{document}